\theoremstyle{plain}
\newtheorem{theorem}{Theorem}
\newtheorem{corollary}{Corollary}[section]
\newtheorem{lemma}[corollary]{Lemma}
\newtheorem{proposition}[corollary]{Proposition}
\theoremstyle{remark}
\newtheorem{remark}[corollary]{Remark}
\newcommand{\Funkcyjne}[1]{#1}
\newcommand{\NN}{\mathbb{N}}
\newcommand{\ZZ}{\mathbb{Z}}
\newcommand{\QQ}{\mathbb{Q}}
\newcommand{\RR}{\mathbb{R}}
\newcommand{\CC}{\mathbb{C} }
\newcommand{\TT}{\mathbb{T}} 
\newcommand{\II}{\mathsf{i}}
\newcommand{\ProjectorExpanded}[1]{\left|#1 \right\rangle \left\langle #1\right|}
\newcommand{\SimpleOperatorExpanded}[2]{\left|#1 \right\rangle \left\langle #2\right|}
\newcommand{\Angle}[3]{\angle\ScalarProduct{#1}{#2}{#3}}
\newcommand{\Mod}{\operatorname{mod}}
\newcommand{\CConjugate}[1]{\overline{#1}}
\newcommand{\ComplexConjugate}[1]{\CConjugate{#1}}
\newcommand{\TruncatedNumbers}[1]{= 1,\ldots,#1}
\newcommand{\Set}[1]{\left\{ #1  \right\}}
\newcommand{\SetSuchThat}[2]{\left\{\, #1 \ | \ #2 \, \right\}}
\newcommand{\Supp}{\operatorname{supp}}
\newcommand{\Cl}{\operatorname{cl}}
\newcommand{\IndicatorFunction}[1]{\mathrm{\mathbf{1}}_{#1}}
\newcommand{\Cardinality}[1]{\# #1}
\newcommand{\HilbertSpace}[1]{\mathcal{H}^{#1}}
\newcommand{\HProduct}[3]{\left\langle #1 \mid #2 \right\rangle_{#3}}
\newcommand{\HermitianProduct}[3]{\HProduct{#1}{#2}{#3}}
\newcommand{\ScalarProduct}[3]{(#1\mid #2)_{#3}}
\newcommand{\HermitianNorm}[2]{\left|#1\right|_{#2}}
\newcommand{\Conjugate}[1]{{#1}^*}
\newcommand{\Bra}[1]{\left\langle #1\right|}
\newcommand{\Ket}[1]{\left|#1\right\rangle}
\newcommand{\Multiplicity}{m}
\newcommand{\HilbertTensorProduct}[1]{\HilbertSpace{1}\otimes\dots\otimes\HilbertSpace{#1}}
\newcommand{\TensorProduct}[2]{#1^{1}\otimes\dots\otimes #1^{#2}}
\newcommand{\BoundedOperators}[1]{\mathcal{B}(#1)}
\newcommand{\HermitianOperators}[1]{\mathcal{A}(#1)} 
\newcommand{\HermitianOperatorsDual}[1]{\HermitianOperators{#1}^*}
\newcommand{\Operator}{\rho}
\newcommand{\Projector}[1]{\mathcal{P}\left(#1\right)}
\newcommand{\PositiveOperators}[1]{\mathcal{P}(#1)}
\newcommand{\SeparableOperators}[1]{\mathcal{S}(#1)}
\newcommand{\DensityOperators}[1]{\mathcal{D}(#1)}
\newcommand{\Cone}{\mathrm{C}}
\newcommand{\DualCone}[1]{{#1}^*}
\newcommand{\FourierTransform}[1]{\mathcal{F}\left(#1\right)}
\newcommand{\InverseFourierTransform}[1]{\mathcal{F}^{-1}\left(#1\right)}
\newcommand{\FourierTransformMapping}{\mathcal{F}}
\newcommand{\InverseFourierTransformMapping}{\mathcal{F}^{-1}}
\newcommand{\Group}[1]{\mathsf{#1}}
\newcommand{\NeutralElementDual}{\Group{\epsilon}}
\newcommand{\Measure}[1]{d #1}
\newcommand{\DualGroup}[1]{{\Group{#1}}^*}
\newcommand{\DiscreteGroup}{\Group{D}}
\newcommand{\Lattice}{\mathrm{\Lambda}}
\newcommand{\QuotientGroup}[2]{\mathit{#1 / #2}}
\newcommand{\IntegrableFunctions}[2]{\LL{1}(#1,#2)}
\newcommand{\Function}{\phi}
\newcommand{\IntegralMappingExpanded}[3]{\int_{#1} #2 \, #3}
\newcommand{\IdotsIntegralMappingExpanded}[3]{\underset{#1}\idotsint #2 \, #3}
\newcommand{\LL}[1]{\Funkcyjne{L}_{#1}}
\newcommand{\Lnorm}[2]{\left\|#1\right\|_{\Funkcyjne{L}_{#2}}}
\newcommand{\SquareIntegrableFunctions}[2]{\Funkcyjne{L}_2(#1,#2)}
\newcommand{\OperatorRepresentedBy}[1]{\Operator(#1)}
\newcommand{\DualOperator}{\omega}
\newcommand{\Character}{\gamma}
\newcommand{\Action}[1]{\chi_{#1}}
\newcommand{\Convolution}[2]{#1\star#2} 
\newcommand{\MultipleConvolution}[2]{\Convolution{#1^1}{\Convolution{\ldots}{#1^{#2}}}}
\newcommand{\TensorConvolution}[2]{#1\logof#2} 
\newcommand{\MultipleTensorConvolution}[2]{\left(\TensorConvolution{#1^1}{\TensorConvolution{\ldots}{#1^{#2}}}\right)}
\newcommand{\ReferenceSetConvolutional}{\prod_{\mu} \left( \IntegrableFunctions{\Group{G}}{\HilbertSpace{\mu}} \cap \SquareIntegrableFunctions{\Group{G}}{\HilbertSpace{\mu}} \right)}
\newcommand{\LCA}{LCA}
\newcommand{\SOHS}{\SeparableOperators{\HilbertSpace{}}}
\newcommand{\EE}[1]{e^{#1}}
\newcommand{\Span}{\operatorname{span}}
\newcommand{\Identity}[1]{\operatorname{Id_{#1}}}
\begin{document}

\title{On the tensor convolution and the quantum separability problem}

\author[G. Pietrzkowski]{Gabriel Pietrzkowski}
\address{Institute of Mathematics, Polish Academy of Sciences\\
00-956 Warszawa, Poland}
\email{G.Pietrzkowski@impan.pl}

\date{}

\begin{abstract}
We consider the problem of separability: decide whether a Hermitian operator on a finite dimensional Hilbert tensor product $\HilbertSpace{}=\HilbertTensorProduct{\Multiplicity}$ is separable or entangled. We show that the tensor convolution $\MultipleTensorConvolution{\Function}{\Multiplicity}:\Group{G}\to\HilbertSpace{}$ defined for mappings $\Function^\mu:\Group{G}\to\HilbertSpace{\mu}$ on an almost arbitrary locally compact abelian group $\Group{G}$, give rise to formulation of an equivalent problem to the separability one.
\end{abstract}


\keywords{multipartite system, separability problem, separable states, entanglement}

\maketitle

\section{Introduction}

The problem of separability for a given Hilbert tensor product $\HilbertSpace{} = \HilbertTensorProduct{\Multiplicity}$ is to determine whether a given positive semi-definite operator $\Operator\in\PositiveOperators{\HilbertTensorProduct{\Multiplicity}}$ is separable or not, i.e. if it can be written as a finite sum 
\begin{align}
\label{eqn:SeprableOperator}
\Operator = \sum_p \lambda_p\ProjectorExpanded{\TensorProduct{v_p}{\Multiplicity}},
\end{align}
where $\lambda_p > 0$ and $v_p^\mu\in\HilbertSpace{\mu}$. Note that in quantum mechanics the density operators $\DensityOperators{\HilbertTensorProduct{\Multiplicity}}$ are considered instead of positive semi-definite ones, and then the separability problem is to recognize if $\Operator\in\DensityOperators{\HilbertTensorProduct{\Multiplicity}}$ is of the form (\ref{eqn:SeprableOperator}) where additionally $\sum_p\lambda_p=1$, and $v_p^\mu\in\HilbertSpace{\mu}$ have norm 1. Clearly, both problems are equivalent -- we consider the first one only for technical reasons.

The main obstacle to efficient solution of this problem is that each separable mixed state (except projectors on simple tensors in a Hilbert tensor product -- separable projectors) is ambiguously decomposable into a convex combination of separable projectors \eqref{eqn:SeprableOperator}. Therefore, if we take, for example, a Hilbert space $\HilbertSpace{} = \HilbertSpace{1}\otimes\HilbertSpace{2}$ and arbitrary $v^1,w^1\in\HilbertSpace{1}$, $v^2,w^2\in\HilbertSpace{2}$, then it is not so simple to recognize that
\begin{equation}
\label{eqn:One}
\ProjectorExpanded{v_0\otimes w_0 + v_1\otimes w_1} + \ProjectorExpanded{v_0\otimes w_1 + v_1\otimes w_0}
\end{equation}
is always a separable operator (of course one can use the Horodeckis' criterion \cite{horodecki} in order to check this). But even if we know that it is separable, it is not immediate to see that it can be written in the form
\begin{equation}
\label{eqn:Two}
\tfrac{1}{2} \Projector{(v_0+v_1)\otimes(w_0+w_1)} + \tfrac{1}{2} \Projector{(v_0-v_1)\otimes(w_0-w_1)},
\end{equation}
where $\Projector{v} = \ProjectorExpanded{v}$ for $v\in\HilbertSpace{}$.
In this article we propose a method of transforming operators on an $\Multiplicity$-fold Hilbert tensor product written in a special form -- in analogy with (\ref{eqn:One}) -- into the positive combination of separable projectors -- like in (\ref{eqn:Two}). 
Moreover, we prove that each separable operator and none of entangled ones can be written in such form (see Theorem \ref{thm:3}). In other words, we restate the problem of separability (see Remark \ref{rem:Separability} after Theorem \ref{thm:3}). 

Entanglement has been considered since 1935 when the EPR paradox was formulated \cite{einstein} by the famous triple Einstein, Podolsky and Rosen. Nevertheless the history of the separability problem begun when Werner defined the notion of separability of bipartite mixed states in his celebrated article \cite{werner}. Years of research convinced scientists that the problem is not so easy to solve. Horodeckis' showed connection of the problem of serability for bipartite states with the one of classifying the so called positive maps \cite{horodecki}, and then generalized this result to the multipartite case \cite{horodecki2}. They also formulated an efficient criterion for the problem in case 
$\HilbertSpace{1}\otimes\HilbertSpace{2}$ is such that $\dim\HilbertSpace{1} \cdot \dim\HilbertSpace{2} \leq 6$, when all positive maps can be expressed in terms of completely positive maps and a partial transpose map (in a certain basis) \cite{stromer,woronowicz,peres}. The intuitive assertion that the problem of separability is very hard to solve, has been formally proven by Gurvits who showed NP-hardness of the problem \cite{gurvits} and recently Gharibian strengthen this result \cite{gharibian} proving its strong NP-hardness (see also \cite{ioannou}). A compact group theoretical approach to the problem has been proposed by Korbicz, Wehr and Lewenstein in \cite{korbicz1,korbicz2} and recently they formulated a similar quantum group approach \cite{korbicz3}. Also geometry of the cone of separable operators (or the convex body of seprable mixed states), strictly related to the separability problem, was studied for example by  Ku\' s and \. Zyczkowski \cite{kus}, Grabowski, Ku\' s and Marmo \cite{grabowski}.
In practice the most useful, in low dimensions, is a numerical test discovered by Doherty, Parrilo and Spedalieri \cite{doherty1,doherty2,doherty3}. 

\section{Results}

Let $\HilbertSpace{} = \HilbertTensorProduct{\Multiplicity}$ be a Hilbert tensor product of finite dimension.
Consider an arbitrary locally compact abelian (\LCA) group $\Group{G}$ with the Haar measure $\Measure{g}$ on it.  
A \emph{tensor convolution} $\MultipleTensorConvolution{\Function}{\Multiplicity}:\Group{G}\to\HilbertSpace{}$ of mappings $\Function^\mu:\Group{G}\to\HilbertSpace{\mu}$ is defined inductively as follows:
\begin{align*}
\TensorConvolution{\Function^1}{\Function^2}(g) &=
\IntegralMappingExpanded
{\Group{G}}
{\Function^1(g-h)\otimes\Function^2(h)}
{\Measure{h}}
\in \HilbertSpace{1}\otimes\HilbertSpace{2},
\\
\vdots
\\
\MultipleTensorConvolution{\Function}{\Multiplicity}(g) &=  
\IntegralMappingExpanded
{\Group{G}}
{\MultipleTensorConvolution{\Function}{\Multiplicity-1}(g-h) \otimes\Function^\Multiplicity(h)}
{\Measure{h}}
\in \HilbertSpace{}.
\end{align*}
Equivalently, we can write
\begin{multline}
\label{eqn:TensorConvolution}
\MultipleTensorConvolution{\Function}{\Multiplicity}(g) = \\
 \IdotsIntegralMappingExpanded
{\Group{G}^{m-1}}
{\Function^1(g - g_2-\ldots-g_m)\otimes \Function^2(g_2) \otimes\cdots\otimes \Function^m(g_m)}
{\Measure{g_2}\cdots\Measure{g_m}}.
\end{multline}

The concept of the tensor convolution is based on that of usual convolution of functions. However, it is highly asymmetric. For example, assume that for $\mu\TruncatedNumbers{\Multiplicity}$ there exist $f^\mu:\Group{G}\to\CC$ and a vector $v^\mu\in\HilbertSpace{\mu}$ such that $\Function^\mu = f^\mu\cdot v^\mu$. Then the tensor convolution of $\Function^\mu$'s equals the convolution of $f^\mu$'s multiplied by the tensor product of $v^\mu$'s, that is
$$
\MultipleTensorConvolution{\Function}{\Multiplicity} = \MultipleConvolution{f}{\Multiplicity}\cdot\TensorProduct{v}{\Multiplicity},
$$
where $\Convolution{}{}$ denotes the standard convolution of functions on the group.

In section \ref{sec:Proof1} we prove the following theorem.

\begin{theorem}
\label{thm:Convolutional}
Let $\Group{G}$ be a locally compact abelian group with the Haar measure $\Measure{g}$ on it. Assume that $\Function^\mu:\Group{G}\to\HilbertSpace{\mu}$ is absolutely and square integrable, for $\mu\TruncatedNumbers{\Multiplicity}$.  Then the Hermitian operator 
\begin{equation}
\label{eqn:CRepresentation}
\tag{\ding{73}}
\OperatorRepresentedBy{\Function} = \IntegralMappingExpanded
{\Group{G}} 
{\ProjectorExpanded{\MultipleTensorConvolution{\Function}{\Multiplicity}(g)}} 
{\Measure{g}},
\end{equation}
acting on $\HilbertTensorProduct{\Multiplicity}$, is separable.
\end{theorem}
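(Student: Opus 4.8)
The plan is to pass to the Fourier transform on $\Group{G}$, where the tensor convolution collapses into an ordinary tensor product. Write $\Phi\colon\Group{G}\to\HilbertSpace{}$ for the tensor convolution of $\Function^1,\dots,\Function^\Multiplicity$, so that $\OperatorRepresentedBy{\Function}=\IntegralMappingExpanded{\Group{G}}{\ProjectorExpanded{\Phi(g)}}{\Measure{g}}$. Since each $\Function^\mu$ lies in $\IntegrableFunctions{\Group{G}}{\HilbertSpace{\mu}}\cap\SquareIntegrableFunctions{\Group{G}}{\HilbertSpace{\mu}}$, a Young-type estimate (bounding $\|\Phi(g)\|$ by the scalar convolution of the norms $\|\Function^\mu(\cdot)\|$, and using $\LL{1}\star\LL{1}\subseteq\LL{1}$ together with $\LL{1}\star\LL{2}\subseteq\LL{2}$) shows inductively that $\Phi\in\IntegrableFunctions{\Group{G}}{\HilbertSpace{}}\cap\SquareIntegrableFunctions{\Group{G}}{\HilbertSpace{}}$. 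In particular $\OperatorRepresentedBy{\Function}$ is a well-defined positive semi-definite Hermitian operator, because $\|\ProjectorExpanded{\Phi(g)}\|=\|\Phi(g)\|^2$ is integrable.

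First I would compute the Hilbert-space-valued Fourier transform $\FourierTransform{\Phi}\colon\DualGroup{G}\to\HilbertSpace{}$. Applying Fubini to \eqref{eqn:TensorConvolution} and using the multiplicativity of characters, $\Character(g_1+\dots+g_\Multiplicity)=\Character(g_1)\cdots\Character(g_\Multiplicity)$, the integral defining the tensor convolution factorizes and yields
\begin{equation*}
\FourierTransform{\Phi}(\Character)=\FourierTransform{\Function^1}(\Character)\otimes\dots\otimes\FourierTransform{\Function^\Multiplicity}(\Character),
\end{equation*}
a \emph{simple tensor} for every $\Character\in\DualGroup{G}$. This is the crux of the argument: although $\Phi(g)$ is in general an entangled vector, on the dual group it is a product vector at each point.

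Next I would transport the defining integral to the dual group by the operator form of the Plancherel theorem. Testing against fixed $a,b\in\HilbertSpace{}$ reduces the desired identity to the scalar Parseval identity applied to the $\LL{2}$-functions $g\mapsto\langle a,\Phi(g)\rangle$ and $g\mapsto\langle b,\Phi(g)\rangle$, whose Fourier transforms are $\langle a,\FourierTransform{\Phi}(\Character)\rangle$ and $\langle b,\FourierTransform{\Phi}(\Character)\rangle$; with the dual Haar measure $\Measure{\Character}$ normalized so that Plancherel is an isometry one obtains
\begin{equation*}
\OperatorRepresentedBy{\Function}=\int_{\DualGroup{G}}\ProjectorExpanded{\FourierTransform{\Function^1}(\Character)\otimes\dots\otimes\FourierTransform{\Function^\Multiplicity}(\Character)}\,\Measure{\Character}.
\end{equation*}
Each integrand is the projector onto a simple tensor, i.e. a separable rank-one operator.

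Finally, it remains to deduce separability in the sense of a finite sum \eqref{eqn:SeprableOperator}. In the finite-dimensional space of Hermitian operators the separable operators $\SeparableOperators{\HilbertSpace{}}$ form a closed convex cone: its base, the convex hull of the compact set of normalized separable projectors, is compact and avoids $0$, so the cone over it is closed. A positive operator-valued Bochner integral of a map taking values in this closed cone again lies in the cone, being a limit of finite positive combinations of its values, and it converges since $\int_{\DualGroup{G}}\|\FourierTransform{\Phi}(\Character)\|^2\,\Measure{\Character}=\int_{\Group{G}}\|\Phi(g)\|^2\,\Measure{g}<\infty$. Hence $\OperatorRepresentedBy{\Function}\in\SeparableOperators{\HilbertSpace{}}$, and Carath\'eodory's theorem for cones rewrites it as a finite positive combination of separable projectors, which is exactly \eqref{eqn:SeprableOperator}. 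The main obstacle is the rigorous vector- and operator-valued harmonic analysis on a general \LCA{} group: justifying Fubini in the factorization step, fixing the normalization of the dual Haar measure so that the operator Plancherel identity holds verbatim, and controlling the Bochner integrals. Once these are settled, the simple-tensor structure of $\FourierTransform{\Phi}$ together with the closedness of $\SeparableOperators{\HilbertSpace{}}$ delivers the conclusion.
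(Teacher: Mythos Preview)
Your proposal is correct and follows essentially the same route as the paper: establish the identity $\OperatorRepresentedBy{\Function}=\int_{\DualGroup{G}}\Projector{\FourierTransform{\Function^1}(\Character)\otimes\cdots\otimes\FourierTransform{\Function^\Multiplicity}(\Character)}\,\Measure{\Character}$ via Parseval, then conclude separability from convexity/closedness of $\SOHS$. The only cosmetic differences are that the paper reduces everything to scalar-valued harmonic analysis by pairing with separable vectors $v=\TensorProduct{v}{\Multiplicity}$ (so that the tensor convolution becomes an ordinary convolution of scalar functions) and computes $\HermitianProduct{v}{\OperatorRepresentedBy{\Function}|v}{}$, whereas you work with the vector-valued transform and test against arbitrary $a,b$; and for the final step the paper invokes the dual-cone characterization $\DualOperator(\OperatorRepresentedBy{\Function})\geq 0$ for all $\DualOperator\in\DualCone{\SOHS}$, while you argue directly that a Bochner integral of a cone-valued map lies in the closed cone---the paper itself remarks, just after its proof, that this is an equivalent way to conclude.
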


Let us consider the example from the introduction once again.
Take $\Group{G} = \ZZ_2$ with the counting measure as the Haar measure. Consider a Hilbert space $\HilbertSpace{} = \HilbertSpace{1}\otimes\HilbertSpace{2}$. Let $\Function^1(g) = v_g\in\HilbertSpace{1}$ and $\Function^2(g) = w_g\in\HilbertSpace{2}$ for $g\in\ZZ_2$. Then the tensor convolution $\TensorConvolution{\Function^1}{\Function^2}$ takes the values
\begin{align*}
\TensorConvolution{\Function^1}{\Function^2}(0) &= v_0\otimes w_0 + v_1\otimes w_1, \\
\TensorConvolution{\Function^1}{\Function^2}(1) &= v_0\otimes w_1 + v_1\otimes w_0.
\end{align*}
The theorem asserts that 
\begin{equation}
\label{eqn:Three}
\ProjectorExpanded{v_0\otimes w_0 + v_1\otimes w_1} + \ProjectorExpanded{v_0\otimes w_1 + v_1\otimes w_0}
\end{equation}
is a separable operator, that is what we pointed out in (\ref{eqn:One}). To see that it equals (\ref{eqn:Two}), we need to recall that for each \LCA~group $\Group{G}$ there exists the dual locally compact abelian group $\DualGroup{G}$. Each element $\Character\in\DualGroup{G}$, which is called a character, represents the continuous function $\Action{\Character}:\Group{G}\to\CC$ (in fact it is a continuous homomorphism of the groups $\Group{G}$ and the unit circle $\TT$ in $\CC$ -- see section \ref{sec:Fourier}). Then for an absolutely integrable functions $f:\Group{G}\to\CC$ we define its Fourier transform $\FourierTransform{f}:\DualGroup{G}\to\CC$ by 
\begin{equation*}
\FourierTransform{f}(\Character) = \IntegralMappingExpanded
{\Group{G}}
{\Action{\Character}(-g)\cdot f(g)}
{\Measure{g}}.
\end{equation*} 
Actually, by the same formula we can define the Fourier transform for an absolutely integrable mapping $\Function^\mu:\Group{G}\to\HilbertSpace{\mu}$  as well. Finally, since $\DualGroup{G}$ is an \LCA~group it possesses the Haar measure $\Measure{\Character}$, which additionally can by normalized so that the Parseval equality (\ref{eqn:ParsevalEquality}) holds. We say that $\Measure{\Character}$ is conjugated with $\Measure{g}$.  
The main part of the proof of Theorem \ref{thm:Convolutional} is the following proposition (proved in section \ref{sec:Proof1}). 

\begin{proposition}
\label{prop:21}
Under the assumptions of Theorem \ref{thm:Convolutional}, let $\DualGroup{G}$ be the dual group of $\Group{G}$ and $\Measure{\Character}$ the Haar measure conjugated with $\Measure{g}$. Then  
\begin{eqnarray}
\label{eqn:36}
\OperatorRepresentedBy{\Function} = 
\IntegralMappingExpanded
{\DualGroup{G}} 
{\Projector{\FourierTransform{\Function^1}(\Character)\otimes \cdots\otimes\FourierTransform{\Function^{\Multiplicity}}(\Character)}} 
{\Measure{\Character}},
\end{eqnarray}
where $\OperatorRepresentedBy{\Function}\in\SOHS$ is given by (\ref{eqn:CRepresentation}), and $\Projector{v}=\ProjectorExpanded{v}$ for $v\in\HilbertSpace{}$.
\end{proposition}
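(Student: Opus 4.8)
The plan is to reduce the operator identity (\ref{eqn:36}) to a scalar statement by testing against arbitrary vectors and then invoking the Parseval equality (\ref{eqn:ParsevalEquality}); the whole argument rests on a convolution theorem for the tensor convolution. Write $\Psi=\MultipleTensorConvolution{\Function}{\Multiplicity}$ for brevity. Since the assignment $v\mapsto\ProjectorExpanded{v}$ is only sesquilinear, one cannot Fourier-transform the operator-valued integrand of (\ref{eqn:CRepresentation}) directly; instead I would fix $u,w\in\HilbertSpace{}$ and compare the matrix elements $\Bra{w}\OperatorRepresentedBy{\Function}\Ket{u}$ of the two sides. Setting $F_u(g)=\HProduct{\Psi(g)}{u}{}$ and $F_w(g)=\HProduct{\Psi(g)}{w}{}$, the left-hand side of (\ref{eqn:36}) contributes
\[
\Bra{w}\OperatorRepresentedBy{\Function}\Ket{u}=\IntegralMappingExpanded{\Group{G}}{\overline{F_w(g)}\,F_u(g)}{\Measure{g}},
\]
a pairing of two scalar functions on $\Group{G}$ to which Parseval will apply.

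The first ingredient I would prove is the convolution theorem
\[
\FourierTransform{\Psi}(\Character)=\FourierTransform{\Function^1}(\Character)\otimes\cdots\otimes\FourierTransform{\Function^{\Multiplicity}}(\Character).
\]
For $\Multiplicity=2$ this comes straight out of the definition of the tensor convolution: the substitution $g=h+k$ (using invariance of the Haar measure) together with the fact that $\Action{\Character}$ is a homomorphism into $\TT$, so that $\Action{\Character}(-(h+k))=\Action{\Character}(-h)\,\Action{\Character}(-k)$, lets Fubini split the double integral into the tensor product of the two Fourier transforms. The general case follows by induction on $\Multiplicity$ from the recursive definition of $\Psi$.

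Next I would feed this into the scalar identity. Applying Parseval to $F_u,F_w\in\SquareIntegrableFunctions{\Group{G}}{\CC}$ rewrites the matrix element above as $\IntegralMappingExpanded{\DualGroup{G}}{\overline{\FourierTransform{F_w}(\Character)}\,\FourierTransform{F_u}(\Character)}{\Measure{\Character}}$. A short computation, using once more that $\Action{\Character}$ is unitary-valued (so $\overline{\Action{\Character}(-g)}=\Action{\Character}(g)=\Action{-\Character}(-g)$), identifies $\FourierTransform{F_u}(\Character)=\HProduct{\FourierTransform{\Psi}(-\Character)}{u}{}$. On the right-hand side of (\ref{eqn:36}) the convolution theorem turns the integrand into $\Projector{\FourierTransform{\Psi}(\Character)}$, whose $(w,u)$ matrix element is $\overline{\HProduct{\FourierTransform{\Psi}(\Character)}{w}{}}\,\HProduct{\FourierTransform{\Psi}(\Character)}{u}{}$. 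These two expressions coincide after the change of variable $\Character\mapsto-\Character$, which leaves $\Measure{\Character}$ invariant because inversion is an automorphism of $\DualGroup{G}$. As $u,w$ were arbitrary, (\ref{eqn:36}) follows.

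The genuine difficulty is analytic rather than algebraic: I must guarantee that $\Psi\in\IntegrableFunctions{\Group{G}}{\HilbertSpace{}}\cap\SquareIntegrableFunctions{\Group{G}}{\HilbertSpace{}}$, so that all the Fourier transforms exist, Fubini legitimizes the interchange of integrals in the convolution theorem, and Parseval applies to $F_u$ and $F_w$ (for which one also needs $\Psi$ square integrable, since $|F_u(g)|\le\|u\|\,\|\Psi(g)\|$). Here the hypothesis that every $\Function^\mu$ is both absolutely and square integrable is exactly what is needed: Young's inequality for convolution, applied along each tensor factor, propagates $L_1\cap L_2$ membership through the iterated tensor convolution. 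Carrying out this integrability bookkeeping, together with the sign and conjugation tracking in the step above, is where the care lies; the algebraic core is the pointwise convolution theorem.
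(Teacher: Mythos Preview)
Your proof is correct and follows essentially the same route as the paper's: reduce the operator identity to a scalar one, establish the convolution theorem for the tensor convolution, and apply Parseval. The only cosmetic difference is that the paper tests the diagonal $\HProduct{v}{\OperatorRepresentedBy{\Function}\,v}{}$ for \emph{separable} $v$ (using that such $v$ span $\HilbertSpace{}$ and hence the projectors $\Projector{v}$ span $\HermitianOperators{\HilbertSpace{}}$), with $\Psi$ placed in the linear slot of the Hermitian product so that no $\Character\mapsto-\Character$ change of variable is needed.
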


Coming back to the example, for $\Group{G}=\ZZ_2$ the dual group is $\DualGroup{G} = \ZZ_2$, and a character $\Character\in\DualGroup{G}$ acts by $\Action{\Character}(g) = e^{\II\cdot g\Character}$. Therefore,
\begin{align*}
\FourierTransform{\Function^1}(\Character) &= v_0 + (-1)^\Character\cdot v_1,&
\FourierTransform{\Function^2}(\Character) &= w_0 + (-1)^\Character\cdot w_1.
\end{align*}
Since the counting measure divided by 2 (on $\DualGroup{G}$) is the Haar measure conjugated with the counting measure on $\Group{G}$, we get from the above proposition that (\ref{eqn:Three}) equals
\begin{equation*}
\tfrac{1}{2} \Projector{(v_0+v_1)\otimes(w_0+w_1)} + \tfrac{1}{2} \Projector{(v_0-v_1)\otimes(w_0-w_1)}
\end{equation*}
as we mentioned in the introduction.

The above example is representative in the sense that for a random choice of functions $\Function^1,\ldots,\Function^\Multiplicity$, on an arbitrary \LCA~group, their tensor convolution takes entangled values almost everywhere. Therefore, we represent separable operators by "sums" of entangled projectors.

Having got Theorem \ref{thm:Convolutional}, a natural question occurs: do for each separable operator $\Operator\in\SOHS$ there exist mappings $\Function^\mu:\Group{G}\to\HilbertSpace{\mu}$ such that $\Operator=\OperatorRepresentedBy{\Function}$ given by (\ref{eqn:CRepresentation})? The answer is affirmative provided that the cardinality of the group $\Group{G}$ is at least $(\dim\HilbertSpace{})^2$.

\begin{theorem}
\label{thm:3}
Let $\HilbertSpace{}=\HilbertTensorProduct{\Multiplicity}$ be an m-fold tensor product of finite dimensional Hilbert spaces $\HilbertSpace{\mu}$.
Let $\Group{G}$ be a locally compact abelian group of cardinality $\Cardinality \Group{G} \geq (\dim \HilbertSpace{})^2$ with the Haar measure $\Measure{g}$ on it. Then $\Operator\in\BoundedOperators{\HilbertSpace{}}$ is a separable operator iff there exist absolutely and square integrable mappings $\Function^\mu:\Group{G}\to\HilbertSpace{\mu}$ such that 
\begin{equation*}
\Operator = \IntegralMappingExpanded
{\Group{G}} 
{\ProjectorExpanded{\MultipleTensorConvolution{\Function}{\Multiplicity}(g)}} 
{\Measure{g}}.
\end{equation*}
\end{theorem}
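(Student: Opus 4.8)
One implication is immediate: if $\Operator$ has the form \eqref{eqn:CRepresentation} for absolutely and square integrable $\Function^\mu$, it is separable by Theorem~\ref{thm:Convolutional}. The whole content is the converse, so the plan is to manufacture mappings $\Function^\mu$ out of a given separable $\Operator$. First I would pass to a finite decomposition. The separable operators form a convex cone in the real vector space $\HermitianOperators{\HilbertSpace{}}$ of Hermitian operators, and $\dim_{\RR}\HermitianOperators{\HilbertSpace{}}=(\dim\HilbertSpace{})^2$, so Carath\'eodory's theorem for cones lets me assume
\[
\Operator=\sum_{p=1}^{N}\lambda_p\,\Projector{\TensorProduct{v_p}{\Multiplicity}},\qquad \lambda_p>0,\quad N\le(\dim\HilbertSpace{})^2.
\]
This is exactly where the hypothesis on $\Cardinality{\Group{G}}$ is spent: I will need $N$ distinct characters, and $\Cardinality{\DualGroup{G}}=\Cardinality{\Group{G}}\ge(\dim\HilbertSpace{})^2\ge N$ when $\Group{G}$ is finite, while for infinite $\Group{G}$ the dual $\DualGroup{G}$ is infinite and room is never an issue.

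The decisive step is to realize this finite sum through the dual-group identity \eqref{eqn:36} of Proposition~\ref{prop:21} rather than through \eqref{eqn:CRepresentation} directly, since there the integrand is the rank-one projector onto $\FourierTransform{\Function^1}(\Character)\otimes\cdots\otimes\FourierTransform{\Function^{\Multiplicity}}(\Character)$, which I can steer frequency by frequency. I would fix distinct characters $\Character_1,\dots,\Character_N\in\DualGroup{G}$ together with a symmetric neighborhood $U$ of the neutral character $\NeutralElementDual$ so small that the translates $\Character_p+U$ are pairwise disjoint, and set
\[
\Function^\mu(g)=\sum_{p=1}^{N}c_p^\mu\,\Action{\Character_p}(g)\,h(g)\,v_p^\mu,
\]
with scalar amplitudes $c_p^\mu$ and one fixed scalar profile $h\colon\Group{G}\to\CC$ whose Fourier transform is supported in $U$. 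By the modulation rule $\FourierTransform{\Function^\mu}(\Character)=\sum_p c_p^\mu\,\FourierTransform{h}(\Character-\Character_p)\,v_p^\mu$, so on each $\Character_p+U$ only the $p$-th summand survives and the integrand of \eqref{eqn:36} vanishes off $\bigcup_p(\Character_p+U)$; substituting into \eqref{eqn:36} and using translation invariance of $\Measure{\Character}$ collapses the dual integral to
\[
\sum_{p=1}^{N}\Big(\prod_{\mu}|c_p^\mu|^2\Big)\Big(\int_{\DualGroup{G}}|\FourierTransform{h}(\Character)|^{2\Multiplicity}\,\Measure{\Character}\Big)\Projector{\TensorProduct{v_p}{\Multiplicity}}.
\]
Choosing the amplitudes so that each bracketed coefficient equals $\lambda_p$ then recovers $\Operator$ exactly.

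What remains --- and where I expect the genuine work to lie --- is producing a profile $h$ reconciling the two opposing demands that $\FourierTransform{h}$ be supported in the small set $U$ and that each $\Function^\mu$ be absolutely and square integrable. My plan is elementary: choose a symmetric neighborhood $V$ of $\NeutralElementDual$ with compact closure and $V+V\subseteq U$, put $k=\InverseFourierTransform{\IndicatorFunction{V}}\in\SquareIntegrableFunctions{\Group{G}}{\CC}$, and take $h=|k|^2$. Then $h\in\IntegrableFunctions{\Group{G}}{\CC}$ because $\int_{\Group{G}}|h|\,\Measure{g}=\Lnorm{k}{2}^2$ equals the finite Haar measure of $V$, while $\FourierTransform{h}=\Convolution{\IndicatorFunction{V}}{\IndicatorFunction{V}}$ is continuous, supported in $V+V\subseteq U$, and strictly positive at $\NeutralElementDual$, so $\int_{\DualGroup{G}}|\FourierTransform{h}|^{2\Multiplicity}\,\Measure{\Character}$ is finite and positive. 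Absolute integrability of $\Function^\mu$ follows from $h\in\LL{1}$ and $|\Action{\Character_p}|\equiv1$; square integrability is read off on the dual side, where $\FourierTransform{\Function^\mu}$ is a finite sum of translates of the compactly supported $\FourierTransform{h}$ and hence lies in $\SquareIntegrableFunctions{\DualGroup{G}}{\HilbertSpace{\mu}}$, whence $\Function^\mu\in\SquareIntegrableFunctions{\Group{G}}{\HilbertSpace{\mu}}$ by Plancherel. The last things I would check are the uniform bookkeeping across cases --- in the discrete dual one simply takes $V=U=\Set{\NeutralElementDual}$ and $h$ constant --- and that $U$ can be shrunk to separate the $\Character_p$ while still containing $\Supp\FourierTransform{h}$; both are routine once $V$ is fixed first and the characters are chosen afterwards.
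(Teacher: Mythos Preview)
Your argument is correct and tracks the paper's proof closely: Carath\'eodory reduces to at most $(\dim\HilbertSpace{})^2$ separable rank-one terms, one then builds mappings whose Fourier transforms have disjoint supports in $\DualGroup{G}$ so that the dual formula of Proposition~\ref{prop:21} collapses to the finite sum. The paper packages this as Proposition~\ref{prop:CRepresentability} via Lemmas~\ref{lem:321} and~\ref{lem:322}, choosing separate profiles $f_p=\Convolution{\IndicatorFunction{V_p}}{\IndicatorFunction{V_p}}$ in disjoint open sets $U_p\subset\DualGroup{G}$, while you use a single profile $\FourierTransform{h}=\Convolution{\IndicatorFunction{V}}{\IndicatorFunction{V}}$ shifted by modulation; these are the same construction up to translation. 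The one genuine simplification is your integrability argument: the paper appeals to positive-definiteness and the Bochner and inversion theorems to show $\InverseFourierTransform{\Convolution{\IndicatorFunction{V}}{\IndicatorFunction{V}}}\in\LL{1}(\Group{G},\CC)$, whereas you note directly that this function equals $|\InverseFourierTransform{\IndicatorFunction{V}}|^2$ and hence has $\LL{1}$-norm $\Lnorm{\IndicatorFunction{V}}{2}^2<\infty$ by Plancherel, avoiding the deeper machinery.
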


This theorem is an immediate consequence of Theorems \ref{thm:Convolutional} and Proposition \ref{prop:CRepresentability}, which is stated in section \ref{sec:chapter6:Representability}. 

\begin{remark}
\label{rem:Separability}
Consider a Hilbert tensor  product $\HilbertSpace{}=\HilbertTensorProduct{\Multiplicity}$.
Let $\Group{G}$ be a locally compact abelian group of cardinality $\Cardinality \Group{G} \geq (\dim \HilbertSpace{})^2$ with the Haar measure $\Measure{g}$ on it.
By virtue of the above theorem the separability problem for $\HilbertSpace{}$ can be written as follows:
\begin{quote}
Determine whether for a given operator $\Operator\in\BoundedOperators{\HilbertSpace{}}$  there exist absolutely and square integrable mappings $\Function^\mu:\Group{G}\to\HilbertSpace{\mu}$ such that 
\begin{equation*}
\Operator = \IntegralMappingExpanded
{\Group{G}} 
{\ProjectorExpanded{\MultipleTensorConvolution{\Function}{\Multiplicity}(g)}} 
{\Measure{g}}.
\end{equation*}
\end{quote}
\end{remark}

In section \ref{sec:spectral} we formulate a problem of finding spectral decompositions of separable operators, and give two examples of nontrivial decompositions basing on the main result of this article.

\section{Preliminaries}

\subsection{Separable operators}

Let $\Multiplicity\geq 2$ be a natural number. For
$\mu = 1,\ldots,\Multiplicity$, let $\HilbertSpace{\mu}$ be a
finite $N_\mu$-dimensional Hilbert space with a Hermitian product
$\HProduct{\cdot}{\cdot}{\mu}$ (and according norm $|\cdot|_\mu$) $\CC$-linear with respect to the
second argument. Let $\HilbertSpace{} =
\HilbertTensorProduct{\Multiplicity}$ be the Hilbert tensor product with
Hermitian product $\HProduct{\cdot}{\cdot}{}$ given by the linear
extension of
$$\HProduct{\TensorProduct{v}{\Multiplicity}}{\TensorProduct{w}{\Multiplicity}}{} = \HProduct{v^1}{w^1}{1} \cdots \HProduct{v^\Multiplicity}{w^\Multiplicity}{\Multiplicity}$$
for $\TensorProduct{v}{\Multiplicity}$ and
$\TensorProduct{w}{\Multiplicity}$ in $\HilbertSpace{}$.

For each vetor $\Ket{v}\in\HilbertSpace{}$ denote by $\Bra{v}\in\Conjugate{\HilbertSpace{}}$ the functional acting on $\HilbertSpace{}$ associated with $\Ket{v}$ by the Hermitian product $\HProduct{\cdot}{\cdot}{}$ in the standard way. Denote by 
$\Projector{v}:\HilbertSpace{}\to\HilbertSpace{}$ the unnormalized Hermitian
projector operator (for abbreviation later called projector
operator or projector) on $\Ket{v}\in\HilbertSpace{}$, that is
\begin{eqnarray*}
\Projector{v}\Ket{w} = \HProduct{v}{w}{}\Ket{v}.
\end{eqnarray*}
Note that in the first two sections we were using the Dirac notation $\ProjectorExpanded{v}$ for such operators.
We denote by $\HermitianOperators{\HilbertSpace{}}$ the real linear space of
Hermitian operators acting on $\HilbertSpace{}$. Simple tensors ${v} =
{\TensorProduct{v}{\Multiplicity}}\in\HilbertSpace{}$  are called
separable and the projector $\Projector{\TensorProduct{v}{\Multiplicity}}$ on a separable vector is called a separable projector. Vectors that are not separable are called non-separable or entangled.

In $\HermitianOperators{\HilbertSpace{}}$ we distinguish the cone
$\SeparableOperators{\HilbertSpace{}}$ generated by the positive
combinations of separable projectors. If
$\Operator\in\SeparableOperators{\HilbertSpace{}}$ then we call it
separable.  Operators that are not separable are called non-separable or entangled. The following proposition is a consequence of \cite[Lemma 2]{zyczkowski1} or \cite[Corollary 1]{gurvits2}.

\begin{proposition}
\label{prop:Interior}
The set of separable operators
$\SeparableOperators{\HilbertSpace{}}$ is a closed convex cone
with nonempty interior in $\HermitianOperators{\HilbertSpace{}}.$
\end{proposition}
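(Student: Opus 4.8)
The plan is to establish the three asserted properties separately, since convexity is essentially built into the definition while the genuine content lies in closedness and in the nonemptiness of the interior. For the convex cone property almost nothing is needed: by construction $\SeparableOperators{\HilbertSpace{}}$ consists of the finite positive combinations $\sum_p \lambda_p \Projector{v_p}$ with $\lambda_p > 0$ and $v_p$ separable, together with the origin. Such a family is visibly closed under multiplication by nonnegative scalars, and, because a positive combination of positive combinations is again of this shape, it is closed under addition; hence it is a convex cone with apex at $0$, and I would dispatch this point in one line.

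For the nonempty interior I would first prove that the separable projectors already \emph{span} $\HermitianOperators{\HilbertSpace{}}$ as a real vector space. By the spectral theorem every Hermitian operator on a single factor $\HilbertSpace{\mu}$ is a real linear combination of rank-one projectors $\Projector{v^\mu}$, so these projectors span $\HermitianOperators{\HilbertSpace{\mu}}$; invoking the identification $\HermitianOperators{\HilbertSpace{}} = \HermitianOperators{\HilbertSpace{1}}\otimes\cdots\otimes\HermitianOperators{\HilbertSpace{\Multiplicity}}$ of real spaces together with the elementary identity $\Projector{v^1}\otimes\cdots\otimes\Projector{v^\Multiplicity} = \Projector{\TensorProduct{v}{\Multiplicity}}$, it follows that the separable projectors span $\HermitianOperators{\HilbertSpace{}}$. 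Now a convex cone with apex at the origin whose linear span is a finite-dimensional space $V$ must have nonempty interior: picking a basis $A_1,\dots,A_n$ of $V$ among the separable projectors (possible since they span and lie in the cone), the set $\Set{\sum_i \lambda_i A_i : \lambda_i > 0}$ is the image of the open positive orthant under the linear isomorphism $(\lambda_i)\mapsto \sum_i \lambda_i A_i$, hence open, and it is contained in $\SeparableOperators{\HilbertSpace{}}$ because the cone is closed under positive combinations. This exhibits an open set inside $\SeparableOperators{\HilbertSpace{}}$.

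The closedness is the step I expect to be the main obstacle, and I would handle it by realizing $\SeparableOperators{\HilbertSpace{}}$ as a cone over a compact base. The set $\Sigma = \Set{\Projector{\TensorProduct{v}{\Multiplicity}} : \HermitianNorm{v^\mu}{\mu}=1 \text{ for all } \mu}$ of normalized separable projectors is the continuous image of the compact product of unit spheres of the factors, hence compact; by Carathéodory's theorem its convex hull $K = \Co \Sigma$ is again compact in finite dimension, and $\SeparableOperators{\HilbertSpace{}} = \Set{\lambda\rho : \lambda \geq 0,\ \rho\in K}$ is exactly the cone it generates. Since every element of $K$ has trace $1$, the base $K$ lies in an affine hyperplane avoiding the origin, so $0\notin K$ and, by compactness, $\inf_{\rho\in K}\|\rho\| > 0$. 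A cone over a compact base bounded away from the origin is closed: given $\lambda_n \rho_n \to A$ with $\lambda_n\geq 0$ and $\rho_n\in K$, compactness of $K$ yields a subsequence with $\rho_n \to \rho \in K$, the lower bound on $\|\rho_n\|$ forces $\lambda_n = \|\lambda_n\rho_n\|/\|\rho_n\|$ to stay bounded and hence (passing to a further subsequence) $\lambda_n\to\lambda\geq 0$, so that $A = \lambda\rho \in \SeparableOperators{\HilbertSpace{}}$; the case $A=0$ is covered by $\lambda=0$. The only genuinely delicate point is this subsequence bookkeeping, which is why I flag closedness as the heart of the argument, whereas convexity and the interior follow cleanly from the spanning observation above.
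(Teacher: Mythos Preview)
Your argument is correct. Each of the three parts---convexity, nonempty interior via the spanning property of separable projectors, and closedness via the compact-base/Carath\'eodory argument---goes through as written. The only places I would ask for one extra sentence are (i) the real tensor product identification $\HermitianOperators{\HilbertSpace{}} \cong \HermitianOperators{\HilbertSpace{1}}\otimes_\RR\cdots\otimes_\RR\HermitianOperators{\HilbertSpace{\Multiplicity}}$, which is true but deserves a justification (e.g.\ via a Hermitian product basis such as the generalized Gell--Mann matrices in each factor), and (ii) the passage to a ``further subsequence'' for $\lambda_n$, which is in fact unnecessary once $\rho_n\to\rho$ along a subsequence, since then $\lambda_n=\|\lambda_n\rho_n\|/\|\rho_n\|$ already converges along that same subsequence.

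Your route differs from the paper's, which does not prove the proposition at all but simply invokes \cite[Lemma~2]{zyczkowski1} or \cite[Corollary~1]{gurvits2}. Those references establish the stronger quantitative fact that a ball of explicit radius around the maximally mixed state lies inside $\SeparableOperators{\HilbertSpace{}}$, from which nonemptiness of the interior is immediate. By contrast, your argument is self-contained and more elementary: it needs no estimate, only the linear-algebra observation that separable projectors span $\HermitianOperators{\HilbertSpace{}}$. The trade-off is that the cited results also hand you an explicit interior point (the identity) and a quantitative radius, information your spanning argument does not yield; on the other hand, your approach makes the closedness step explicit, which the paper leaves entirely to the reader.
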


Denote by $\HermitianOperatorsDual{\HilbertSpace{}}$ the dual space to $\HermitianOperators{\HilbertSpace{}}$ of linear functionals acting on $\HermitianOperators{\HilbertSpace{}}$. For every closed convex cone $\Cone \subset\HermitianOperators{\HilbertSpace{}}$ the dual cone $\DualCone{\Cone}\subset\HermitianOperatorsDual{\HilbertSpace{}}$ is defined by
\begin{eqnarray*}
\DualCone{\Cone} = \{ \DualOperator \in \HermitianOperatorsDual{\HilbertSpace{}} \ | \ \forall_{\Operator\in\Cone}\ \DualOperator(\Operator)\geq 0 \}.
\end{eqnarray*}
By the Hahn-Banach theorem, the second dual cone equals the initial one, that is $\DualCone{\DualCone{\Cone}} = \Cone$. Therefore $\Operator\in\Cone$ iff $\DualOperator(\Operator) \geq 0$ for all $\DualOperator\in \DualCone{\Cone}$. 

Taking $\Cone = \SeparableOperators{\HilbertSpace{}}$ we get the following proposition. 

\begin{proposition}
\label{prop:SeparableCone}
A Hermitian operator $\Operator$ is separable iff $\DualOperator(\Operator) \geq 0$  for all $\DualOperator\in\DualCone{\SOHS}$.
\end{proposition}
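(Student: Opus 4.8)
The plan is to specialize the general bidual-cone identity already recorded in the text to the cone of separable operators. By Proposition~\ref{prop:Interior}, the set $\SOHS = \SeparableOperators{\HilbertSpace{}}$ is a closed convex cone in the finite-dimensional real space $\HermitianOperators{\HilbertSpace{}}$, so the machinery just developed applies verbatim with $\Cone = \SOHS$.

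First I would dispatch the forward implication directly from the definition of the dual cone: if $\Operator$ is separable then $\Operator\in\SOHS$, and by definition every $\DualOperator\in\DualCone{\SOHS}$ satisfies $\DualOperator(\Operator')\geq 0$ for all $\Operator'\in\SOHS$; in particular $\DualOperator(\Operator)\geq 0$. No topological input is needed for this direction.

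For the converse I would invoke the bipolar identity $\DualCone{\DualCone{\Cone}}=\Cone$, valid for any closed convex cone $\Cone$ and established above by means of the Hahn--Banach theorem. Suppose $\DualOperator(\Operator)\geq 0$ for every $\DualOperator\in\DualCone{\SOHS}$; this is precisely the assertion that $\Operator$ lies in the second dual cone $\DualCone{\DualCone{\SOHS}}$. Taking $\Cone=\SOHS$ in the identity then yields $\Operator\in\DualCone{\DualCone{\SOHS}}=\SOHS$, that is, $\Operator$ is separable.

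The only point demanding care --- and hence the nearest thing to an obstacle --- is the hypothesis of the bipolar theorem: the identity $\DualCone{\DualCone{\Cone}}=\Cone$ rests essentially on $\Cone$ being \emph{closed} and convex, so that any point outside it can be strictly separated from it by a hyperplane through the origin. That hypothesis is exactly what Proposition~\ref{prop:Interior} supplies for $\SOHS$; once it is in hand, the proposition follows with no further computation.
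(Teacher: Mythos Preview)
Your proof is correct and follows exactly the approach of the paper, which simply records that the proposition is obtained by taking $\Cone=\SOHS$ in the bipolar identity $\DualCone{\DualCone{\Cone}}=\Cone$ just established via Hahn--Banach. Your write-up merely spells out the two directions and explicitly cites Proposition~\ref{prop:Interior} for closedness; this is a faithful elaboration of the paper's one-line derivation.
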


Note that each trace-one operator $\DualOperator\in\DualCone{\SOHS}$ which is not positive semi-definite is called entanglement witness \cite{horodecki}. 

\subsection{\LCA~groups}
\label{sec:Fourier}

In this section we gather some useful facts on Fourier analysis on locally compact abelian groups (we refer to \cite{rudin} for a more comprehensive view). Let $\Group{G}$ be an LCA group. These are, for example, all abelian groups (e.g. $\ZZ_n$, $\ZZ^d$, $\QQ^d$, $\RR^d$, $\QuotientGroup{\RR^d}{\Lattice}$, where $\Lattice\subset\RR^d$ is a discrete subgroup) with discrete topology as well as $\RR^n$ and $\TT^d = \QuotientGroup{\RR^d}{\ZZ^d}$ with standard Euclidean topology ($\QQ^d$ with standard topology is not an \LCA~group since it is not locally compact). 
Consider all continuous homomorphisms $\Action{}$ from $\Group{G}$ to the unit circle in $\CC$. Denote by $\DualGroup{G}$ the set of such homomorphisms; we let an element $\Character\in\DualGroup{G}$ correspond to the homomorphism denoted by $\Action{\Character}:\Group{G}\to\CC$. Then $\DualGroup{G}$ is an abelian group with the neutral element $\NeutralElementDual$ s.t. $\Action{\NeutralElementDual} \equiv 1$ and the operation $"+"$ s.t. $\Action{\Character+\Character '}(g) = \Action{\Character}(g)\cdot\Action{\Character '}(g)$ for all $g\in\Group{G}$.
Now, $\DualGroup{G}$ embeded with the weak topology (the weakest topology for which $\Action{\cdot}(g):\DualGroup{G}\to\CC$ is continuous for all $g\in\Group{G}$) is an \LCA~group. Moreover, by the Pontryagin duality the double dual group $\DualGroup{\DualGroup{G}}$ is canonically isomorphic with $\Group{G}$. Some examples are in order. The groups $\ZZ_n$ and $\RR^d$ (with Euclidean topology) are self-dual, i.e. $\DualGroup{\ZZ}_n = \ZZ_n$ and ${\RR^d}^* = \RR^d$. The action of the element $\Character\in\DualGroup \ZZ_n$ is $\Action{\Character}(g) = e^{2\pi\II\cdot\Character g/n}$, and the action of the element $\Character\in{\RR^d}^*$ is $\Action{\Character} = e^{2\pi\II\cdot\ScalarProduct{\Character}{g}{}}$, where $\ScalarProduct{\cdot}{\cdot}{}$ is any scalar product. The dual group of $\ZZ^d$ is $\TT^d$ (with topology induced from the Euclidean one) and vice versa. The action of $\Character\in\TT^d$ on $\ZZ^d$ (as well as the action of $\Character\in\ZZ^d$ on $\TT^d$) is given by the same formula as in the $\RR^d$ case.

For each \LCA~group $\Group{G}$ there exists the Haar measure, i.e. the only (up to a positive multiplicative constant) regular non-negative Borel measure invariant with respect to all translations. We denote such measure by $\Measure{g}$. Now, for measurable $f,f':\Group{G}\to\CC$ such that $\IntegralMappingExpanded{\Group{G}}{|f(g-h)\cdot f'(h)|}{\Measure{h}} < \infty$ we define their convolution $\Convolution{f}{f'}:\Group{G}\to\CC$ in the standard way: $\Convolution{f}{f'}(g) = \IntegralMappingExpanded{\Group{G}}{f(g-h)\cdot f'(h)}{\Measure{h}}$. For our purposes it is important that for $p=1,2$, if $f\in\LL{1}(\Group,\CC)$ (i.e. it is absolutely integrable) and $f'\in\LL{p}(\Group{G},\CC)$ (i.e. $\IntegralMappingExpanded{\Group{G}}{|f'(g)|^p}{\Measure{g}} < \infty$), then $\Lnorm{\Convolution{f}{f'}}{p} \leq \Lnorm{f}{1}\cdot\Lnorm{f'}{p}$. Hence $\LL{1}(\Group{G},\CC)\cap\LL{2}(\Group{G},\CC)$ is closed under the convolution operation.

Given an \LCA~group $\Group{G}$, the Fourier transform $\FourierTransformMapping$ is the mapping from $\LL{2}(\Group{G},\CC)$ to $\LL{2}(\DualGroup{G},\CC)$ (recall that $\DualGroup{G}$ is an \LCA~group so it has the well defined Haar measure). Formally, we first define $\FourierTransform{f}:\DualGroup{G}\to\CC$ for each $f\in\LL{1}(\Group{G},\CC)$ by 
\begin{equation*}
\FourierTransform{f}(\Character) = \IntegralMappingExpanded
{\Group{G}}
{\Action{\Character}(-g)\cdot f(g)}
{\Measure{g}}.
\end{equation*} 
Then we distinguish a dense class of functions $f$ in $\LL{1}(\Group{G},\CC)\cap\LL{2}(\Group{G},\CC)$ for which $\FourierTransform{f}\in\LL{1}(\DualGroup{G},\CC)$ and finally extend the Fourier transform to all functions in $\LL{2}(\Group{G},\CC)$ with the image $\LL{2}(\DualGroup{G},\CC)$. The point is that for the Haar mesure $\Measure{g}$ on $\Group{G}$ there exists the Haar measure $\Measure{\Character}$ on $\DualGroup{G}$ such that for all $f, f'\in\LL{2}(\Group{G},\CC)$ the Parseval equality
\begin{equation}
\label{eqn:ParsevalEquality}
\IntegralMappingExpanded
{\Group{G}}
{\ComplexConjugate{f(g)}\cdot f'(g)}
{\Measure{g}} =
\IntegralMappingExpanded
{\DualGroup{G}}
{\ComplexConjugate{\FourierTransform{f}(\Character)}\cdot \FourierTransform{f'}(\Character)}
{\Measure{\Character}}
\end{equation}
holds true. Moreover, $\FourierTransformMapping$ is an isomorphism of the Hilbert spaces, so there exists the inverse Fourier transform $\InverseFourierTransformMapping:\LL{2}(\DualGroup{G},\CC) \to\LL{2}(\Group{G},\CC)$ given by
\begin{equation}
\label{eqn:InvereseFourierTransform}
\InverseFourierTransform{\hat f}(g) = \IntegralMappingExpanded
{\DualGroup{G}}
{\Action{\Character}(g)\cdot \hat f(\Character)}
{\Measure{\Character}}
\end{equation} 
for $\hat f:\DualGroup{G}\to\CC$.

Finally, we need to know that the convolution of functions on $\Group{G}$ Fourier-transforms to the multiplication of the transformed functions, i.e.
\begin{equation}
\label{eqn:Convolution}
\FourierTransform{\Convolution f {f'}}(\Character) = \FourierTransform{f}(\Character)\cdot\FourierTransform{f'}(\Character)
\end{equation}
for all $f,f'\in\LL{1}(\Group{G},\CC)\cap\LL{2}(\Group{G},\CC)$ and $\Character\in\DualGroup{G}$.

\section{Proof of Theorem \ref{thm:Convolutional}}
\label{sec:Proof1}

Recall that we want to prove that if $\Group{G}$ is an \LCA~group with the Haar measure $\Measure{g}$, and if $\Function=(\Function^1,\ldots,\Function^\Multiplicity)\in\ReferenceSetConvolutional$,  then the Hermitian operator 
\begin{equation*}
\tag{\ding{73}}
\OperatorRepresentedBy{\Function} = \IntegralMappingExpanded
{\Group{G}} 
{\Projector{\MultipleTensorConvolution{\Function}{\Multiplicity}(g)}} 
{\Measure{g}},
\end{equation*}
acting on $\HilbertSpace{}=\HilbertTensorProduct{\Multiplicity}$, is separable.

First of all, notice that the convolution (tensor convolution) is naturally considered on the space of integrable functions (mappings), but we will also use the Fourier transform and the Parseval equality which is true for square integrable functions (mappings). Hence we choose above space of mappings because of technical reasons. Namely, as we mentioned $\LL{1}(\Group{G},\CC)\cap\LL{2}(\Group{G},\CC)$ is closed under the convolution operation. Hence for $\Function^\mu\in\LL{1}(\Group{G},\HilbertSpace{\mu}) \cap\LL{2}(\Group{G},\HilbertSpace{\mu})$, $\TensorConvolution{\Function^1}{\Function^2}\in \LL{1}(\Group{G},\HilbertSpace{1}\otimes\HilbertSpace{2}) \cap\LL{2}(\Group{G},\HilbertSpace{1}\otimes\HilbertSpace{2})$ and so on. Finally, we get that 
\begin{equation*}
\MultipleTensorConvolution{\Function}{\Multiplicity} \in\LL{1}(\Group{G},\HilbertTensorProduct{\Multiplicity}) \cap\LL{2}(\Group{G},\HilbertTensorProduct{\Multiplicity}).
\end{equation*}
Therefore, the integral in (\ref{eqn:CRepresentation}) is well defined. 

\begin{proof}[Proof of Proposition \ref{prop:21}]
Since separable vectors in $\HilbertSpace{}$ span all the space, a mapping $\Function$ in $\SquareIntegrableFunctions{\Group{G}}{\HilbertSpace{}}$ is unambiguously defined by functions
\begin{equation}
\label{eqn:32}
\Function[v] := \HermitianProduct{\TensorProduct{v}{\Multiplicity}}{\Function(\cdot)}{} :\Group{G}\to \CC,
\end{equation} 
where $v=\TensorProduct{v}{\Multiplicity}\in\SOHS$ is arbitrary. If $\Function$ is of the form $\MultipleTensorConvolution{\Function}{\Multiplicity}$, then $\Function[v]$ is, in fact, a standard convolution of functions $\Function^\mu[v^\mu] := \HermitianProduct{v^\mu}{\Function^\mu(\cdot)}{\mu}:\Group{G}\to\CC$, that is
\begin{equation}
\label{eqn:33}
\Function[v] = \Convolution{\Convolution{\Function^1[v^1]}{\cdots}} {\Function^\Multiplicity[v^\Multiplicity]}.
\end{equation} 
Therefore, using (\ref{eqn:Convolution}) we get that
\begin{equation}
\label{eqn:FourierTrnasformConvolutional}
\begin{split}
\FourierTransform{\Function[v]}(\Character) &= \FourierTransform{\Function^1[v^1]}(\Character)\cdots \FourierTransform{\Function^{\Multiplicity}[v^{\Multiplicity}]}(\Character) \\
&= 
\HermitianProduct{v^1}{\FourierTransform{\Function^1}(\Character)}{1} \cdots \HermitianProduct{v^\Multiplicity} {\FourierTransform{\Function^\Multiplicity}(\Character)}{\Multiplicity} \\
&= 
\HermitianProduct{\TensorProduct{v}{\Multiplicity}} {\FourierTransform{\Function^1}(\Character)\otimes \cdots\otimes\FourierTransform{\Function^{\Multiplicity}}(\Character)}{},
\quad \Character\in\DualGroup{G}.
\end{split}
\end{equation}
In the second line, we use the fact that $\FourierTransformMapping$ is a linear operator.
Consequently, $$\FourierTransform{ \MultipleTensorConvolution{\Function}{\Multiplicity}}(\Character) = \FourierTransform{\Function^1}(\Character)\otimes \cdots\otimes\FourierTransform{\Function^{\Multiplicity}}(\Character).$$ 

Now, using the formula \eqref{eqn:CRepresentation} for $\OperatorRepresentedBy{\Function}$, we compute for a given vector $v = \TensorProduct{v}{\Multiplicity}$,
\begin{align*}
\HermitianProduct{v}{\OperatorRepresentedBy{\Function}|v}{} 
&= 
\IntegralMappingExpanded
{\Group{G}} 
{\HermitianNorm{\MultipleTensorConvolution{\Function}{\Multiplicity}[v](g)}{}^2} 
{\Measure{g}} \\
&= 
\IntegralMappingExpanded
{\Group{G}} 
{\HermitianNorm{\Convolution{\Convolution{\Function^1[v^1]}{\cdots}} {\Function^\Multiplicity[v^\Multiplicity](g)}
}{}^2} 
{\Measure{g}} & \text{by \eqref{eqn:33}} \\
&= 
\IntegralMappingExpanded
{\DualGroup{G}} 
{\HermitianNorm{\FourierTransform{\Function^1[v^1]}(\Character)\cdots \FourierTransform{\Function^\Multiplicity[v^\Multiplicity]}(\Character)}{}^2} 
{\Measure{\Character}} & \text{by Parseval's equality} \\
&= 
\IntegralMappingExpanded
{\DualGroup{G}} 
{\HermitianNorm{\left(\FourierTransform{\Function^1}(\Character)\otimes \cdots\otimes\FourierTransform{\Function^{\Multiplicity}}(\Character)\right)[v]}{}^2} 
{\Measure{\Character}} & \text{by \eqref{eqn:FourierTrnasformConvolutional} and \eqref{eqn:32}} \\
&= 
\HermitianProduct{v}
{
\IntegralMappingExpanded
{\DualGroup{G}} 
{\Projector{\FourierTransform{\Function^1}(\Character)\otimes \cdots\otimes\FourierTransform{\Function^{\Multiplicity}}(\Character)}} 
{\Measure{\Character}}
\mid v}{},
\end{align*}
where $\Measure{\Character}$ is the Haar measure on $\DualGroup{G}$ conjugated with $\Measure{g}$, and Parseval's equality is given by \eqref{eqn:ParsevalEquality}. Therefore, 
\begin{eqnarray*}
\OperatorRepresentedBy{\Function} = 
\IntegralMappingExpanded
{\DualGroup{G}} 
{\Projector{\FourierTransform{\Function^1}(\Character)\otimes \cdots\otimes\FourierTransform{\Function^{\Multiplicity}}(\Character)}} 
{\Measure{\Character}}.
\end{eqnarray*}
\end{proof}

\begin{proof}[Proof of Theorem \ref{thm:Convolutional}]
We use the fact that the set of separable operators $\SOHS$ is a closed convex cone, and therefore, by Proposition \ref{prop:SeparableCone}, $\Operator\in\SOHS$ iff $\DualOperator(\Operator) \geq 0$ for all $\DualOperator\in\DualCone{\SOHS}$. 

Let us take absolutely and square integrable $\Function^\mu:\Group{G}\to\HilbertSpace{\mu}$ and assume $\OperatorRepresentedBy{\Function}$ is given by (\ref{eqn:CRepresentation}). By Proposition \ref{prop:21} it is an integral of separable operators, and so for every $\DualOperator\in\DualCone{\SOHS}$, $\DualOperator(\OperatorRepresentedBy{\Function})$ can be seen as an integral of a non-negative function. Therefore, $\DualOperator(\OperatorRepresentedBy{\Function})\geq 0$, so we end the proof using Proposition \ref{prop:SeparableCone}.
\end{proof}

Note that Proposition \ref{prop:SeparableCone} was used only for a technical purpose. Since $\SOHS$ is a closed convex cone and 
$\Projector{\FourierTransform{\Function^1}(\Character)\otimes \cdots\otimes\FourierTransform{\Function^{\Multiplicity}}(\Character)}\in\SOHS$, it is intuitive that their "infinite positive combination" is separable as well. It is because an integral of a mapping with values in a closed convex cone has value in this cone, which is a consequence of the Hahn-Banach separation theorem.

\section{Proof of Theorem \ref{thm:3}}
\label{sec:chapter6:Representability}

Theorem \ref{thm:3} will follow immediately from Theorem \ref{thm:Convolutional} and underneath Proposition \ref{prop:CRepresentability}. Before we state it, we need one definition. A separable operator $\Operator$ is called \emph{$P$-separable} if there exist $P\in\NN$ and  separable vectors $v_p\in\HilbertSpace{}$ (not necessarily different from $0\in\HilbertSpace{}$) for $p\TruncatedNumbers{P}$ such that
$$
\Operator = \sum_{p=1}^P \Projector{v_p}.
$$
Clearly, if $Q\leq P$, then $Q$-separability of an operator implies its $P$-separability.

\begin{proposition}
\label{prop:CRepresentability}
Let $\Group{G}$ be an \LCA~group. If the cardinality of $\Group{G}$ is not less then $P$, that is $\Cardinality{\Group{G}} \geq P$, then every $P$-separable operator on $\HilbertSpace{}$ is representable in the form \eqref{eqn:CRepresentation} in Theorem \ref{thm:Convolutional}.
\end{proposition}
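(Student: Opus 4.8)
The plan is to pass to the dual group via Proposition~\ref{prop:21} and reduce the statement to a construction of mappings with disjoint Fourier supports. Write the given $P$-separable operator as $\Operator=\sum_{p=1}^{P}\Projector{v_p}$ with separable $v_p=\TensorProduct{v_p}{\Multiplicity}$. I look for scalar functions $\psi_1,\dots,\psi_P$ on $\DualGroup{G}$ with \emph{pairwise disjoint supports}, and set $\FourierTransform{\Function^\mu}(\Character)=\sum_{p=1}^{P}c_p\,\psi_p(\Character)\,v_p^\mu$ for constants $c_p>0$ to be fixed. On $\Supp\psi_p$ every other summand vanishes, so all cross terms disappear and $\FourierTransform{\Function^1}(\Character)\otimes\cdots\otimes\FourierTransform{\Function^{\Multiplicity}}(\Character)=c_p^{\Multiplicity}\psi_p(\Character)^{\Multiplicity}v_p$ there. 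Substituting into \eqref{eqn:36} yields $\OperatorRepresentedBy{\Function}=\sum_{p}|c_p|^{2\Multiplicity}\bigl(\int_{\DualGroup{G}}|\psi_p|^{2\Multiplicity}\,\Measure{\Character}\bigr)\Projector{v_p}$, so once each $\int_{\DualGroup{G}}|\psi_p|^{2\Multiplicity}\,\Measure{\Character}$ is finite and positive, the choice $c_p=\bigl(\int_{\DualGroup{G}}|\psi_p|^{2\Multiplicity}\,\Measure{\Character}\bigr)^{-1/(2\Multiplicity)}$ gives $\OperatorRepresentedBy{\Function}=\sum_p\Projector{v_p}=\Operator$.

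The real content is to realise such $\psi_p$ as Fourier transforms of absolutely and square integrable mappings. I first build a single band-limited bump. Choose a compact set $S\subseteq\DualGroup{G}$ of positive (hence finite) Haar measure and put $\phi:=\bigl|\InverseFourierTransform{\IndicatorFunction{S}}\bigr|^{2}$ on $\Group{G}$. Since $\IndicatorFunction{S}$ is both integrable and square integrable on $\DualGroup{G}$, the function $\InverseFourierTransform{\IndicatorFunction{S}}$ is bounded and square integrable, hence lies in $\LL{4}$ as well; consequently $\phi\in\LL{1}(\Group{G},\CC)\cap\LL{2}(\Group{G},\CC)$. Moreover $\FourierTransform{\phi}=\Convolution{\IndicatorFunction{S}}{\IndicatorFunction{-S}}$ (product--convolution duality) is bounded, does not vanish identically, and is supported in the compact set $S-S$; in particular $\int_{\DualGroup{G}}|\FourierTransform{\phi}|^{2\Multiplicity}\,\Measure{\Character}$ is finite and positive. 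For $\Character_p\in\DualGroup{G}$ the modulated mapping $\Action{\Character_p}\cdot\phi$ again lies in $\LL{1}\cap\LL{2}$ and satisfies $\FourierTransform{\Action{\Character_p}\cdot\phi}(\Character)=\FourierTransform{\phi}(\Character-\Character_p)$. Thus $\psi_p:=\FourierTransform{\phi}(\cdot-\Character_p)$ is supported in $\Character_p+(S-S)$, each $\int_{\DualGroup{G}}|\psi_p|^{2\Multiplicity}\,\Measure{\Character}$ equals the common value above, and $\Function^\mu:=\sum_p c_p\,\Action{\Character_p}\cdot\phi\;v_p^\mu$ is absolutely and square integrable.

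What remains --- and what I expect to be the main obstacle --- is to choose $\Character_1,\dots,\Character_P$ so that the translates $\Character_p+(S-S)$ are \emph{pairwise disjoint}, which is where the hypothesis $\Cardinality{\Group{G}}\ge P$ enters. Disjointness amounts to $\Character_p-\Character_q\notin D$ for $p\ne q$, where $D:=(S-S)-(S-S)$ is compact and symmetric, and I obtain the centres by a greedy packing, distinguishing three cases for $\DualGroup{G}$. If $\DualGroup{G}$ is non-compact, then at each stage the set $\bigcup_{i<j}(\Character_i+D)$ is a finite union of compact translates and cannot exhaust $\DualGroup{G}$, so $P$ centres can be selected for any fixed $S$. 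If $\DualGroup{G}$ is compact but non-discrete, its Haar measure is non-atomic; by continuity of subtraction I may shrink $S$ until $D$ has Haar measure below a $1/(P-1)$ fraction of that of $\DualGroup{G}$, after which the greedy procedure again produces $P$ centres. Finally, if $\DualGroup{G}$ is discrete, then $\Group{G}$ is compact and $\Cardinality{\DualGroup{G}}=\Cardinality{\Group{G}}\ge P$, so I take $S=\{0\}$ (now $\InverseFourierTransform{\IndicatorFunction{\{0\}}}$ is constant on the compact $\Group{G}$, and the integrability above is immediate) and let $\Character_1,\dots,\Character_P$ be any $P$ distinct points. In each case the supports are disjoint, and Proposition~\ref{prop:21} gives $\OperatorRepresentedBy{\Function}=\sum_p\Projector{v_p}=\Operator$, proving the representability.
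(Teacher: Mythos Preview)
Your argument is correct and follows essentially the same route as the paper: reduce to the dual side via Proposition~\ref{prop:21}, build compactly Fourier-supported bumps as autoconvolutions of indicator functions, and use $\Cardinality{\DualGroup{G}}\ge P$ to obtain $P$ disjoint Fourier supports. Your bump $\FourierTransform{\phi}=\Convolution{\IndicatorFunction{S}}{\IndicatorFunction{-S}}$ is exactly the paper's $f=\Convolution{\IndicatorFunction{V}}{\IndicatorFunction{V}}$ (with $V$ symmetric), and your direct observation that $\phi=\bigl|\InverseFourierTransform{\IndicatorFunction{S}}\bigr|^{2}\in\LL{1}$ replaces the paper's appeal to Bochner's and the inversion theorem for the same fact.

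The organizational difference is in how the disjoint supports are produced. The paper argues abstractly: if $\Cardinality{\DualGroup{G}}<P$ then $\DualGroup{G}$ is finite, hence self-dual, so $\Cardinality{\Group{G}}<P$; thus $\Cardinality{\DualGroup{G}}\ge P$, and any $P$ points in a Hausdorff space admit disjoint open neighbourhoods, into each of which Lemma~\ref{lem:321} plants a bump. You instead build a single bump and translate it, handling the packing by a three-case analysis. Both work, and your version has the mild advantage of making the $\LL{1}$-integrability elementary. One small imprecision: in your third case the equality $\Cardinality{\DualGroup{G}}=\Cardinality{\Group{G}}$ is false for infinite discrete $\DualGroup{G}$ (e.g.\ $\Group{G}=\TT$, $\DualGroup{G}=\ZZ$); but since that situation is already covered by your non-compact case, the remaining case is $\DualGroup{G}$ finite, where the equality does hold and your argument goes through.
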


The proof will immediately follow from the ensuing lemma.

\begin{lemma}
\label{lem:321}
Let $\Group{G}$ be an \LCA~group and $\DualGroup{G}$ its dual. For a given open set $U\subset\DualGroup{G}$ there exists a continuous function $f:{\DualGroup{G}}\to{\CC}$ with compact support contained in $U$ such that $\Lnorm{f}{2\Multiplicity} > 0$ and $\InverseFourierTransform{f}\in\IntegrableFunctions{\Group{G}}{\CC}$ ($\InverseFourierTransformMapping$ is the inverse Fourier transform given by (\ref{eqn:InvereseFourierTransform})).
\end{lemma}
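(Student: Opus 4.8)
The plan is to realize $f$ as a convolution on $\DualGroup{G}$ of two continuous, compactly supported functions; it is exactly this convolution structure that forces the inverse transform to land in $\IntegrableFunctions{\Group{G}}{\CC}$, whereas a generic continuous bump supported in $U$ would only give $\InverseFourierTransform{f}\in C_0(\Group{G})$ and nothing better.

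First I would fix a point $\Character_0\in U$ and use that $U$ is open, together with the continuity of addition on $\DualGroup{G}$, to produce compact neighbourhoods $A\ni\Character_0$ and $B\ni\NeutralElementDual$ with $A+B\subset U$. Indeed, the set $\{(\alpha,\beta):\alpha+\beta\in U\}$ is open in $\DualGroup{G}\times\DualGroup{G}$ and contains $(\Character_0,\NeutralElementDual)$, and local compactness lets me shrink the resulting factors to compact neighbourhoods. Working with two distinct supports, one near $\Character_0$ and one near the identity, deliberately avoids asking for a symmetric inclusion $S+S\subset U$: one cannot in general \emph{halve} a character, so such a symmetric choice may be impossible (e.g. on $\ZZ_2$ when $\NeutralElementDual\notin U$). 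By Urysohn's lemma on the locally compact Hausdorff space $\DualGroup{G}$ I then choose continuous functions $u,v\geq 0$, not identically zero, with $\Supp u\subset A$ and $\Supp v\subset B$.

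Next I set $f=\Convolution{u}{v}$, the convolution taken on $\DualGroup{G}$ with respect to $\Measure{\Character}$. Then $f$ is continuous and $\Supp f\subset A+B$, which is compact and contained in $U$. Moreover $\IntegralMappingExpanded{\DualGroup{G}}{f}{\Measure{\Character}}=\bigl(\IntegralMappingExpanded{\DualGroup{G}}{u}{\Measure{\Character}}\bigr)\bigl(\IntegralMappingExpanded{\DualGroup{G}}{v}{\Measure{\Character}}\bigr)>0$ by Fubini and translation invariance, so $f\not\equiv 0$; being continuous with compact support, $f$ lies in every $\LL{p}$, and hence $\Lnorm{f}{2\Multiplicity}>0$. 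Finally, since $u,v$ are continuous with compact support they belong to $\IntegrableFunctions{\DualGroup{G}}{\CC}\cap\SquareIntegrableFunctions{\DualGroup{G}}{\CC}$, so the Pontryagin-dual form of the convolution identity \eqref{eqn:Convolution} gives $\InverseFourierTransform{f}=\InverseFourierTransform{u}\cdot\InverseFourierTransform{v}$. By the Parseval equality \eqref{eqn:ParsevalEquality} we have $\InverseFourierTransform{u},\InverseFourierTransform{v}\in\SquareIntegrableFunctions{\Group{G}}{\CC}$, whence their pointwise product lies in $\IntegrableFunctions{\Group{G}}{\CC}$ by the Cauchy--Schwarz inequality. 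Thus $\InverseFourierTransform{f}\in\IntegrableFunctions{\Group{G}}{\CC}$, as required.

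I expect the only genuinely delicate point to be the $\LL{1}$-membership of $\InverseFourierTransform{f}$: it is precisely the factorization $f=\Convolution{u}{v}$ (equivalently, membership of $f$ in the Fourier algebra of $\DualGroup{G}$) that turns $\InverseFourierTransform{f}$ into a product of two square-integrable functions and lets Cauchy--Schwarz close the argument. Everything else, namely the existence of the bumps $u,v$, the support bookkeeping $\Supp(\Convolution{u}{v})\subset\Supp u+\Supp v$, and the non-triviality of $f$, is routine topology and Fubini.
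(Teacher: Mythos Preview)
Your proof is correct and shares the paper's core idea: realize $f$ as a convolution on $\DualGroup{G}$ so that its inverse transform factors nicely. The difference is in how the integrability of $\InverseFourierTransform{f}$ is secured. The paper first translates to make $\NeutralElementDual\in U$, takes $f=\Convolution{\IndicatorFunction{V}}{\IndicatorFunction{V}}$ with $V=-V$ a symmetric neighbourhood of the identity, observes that this $f$ is positive definite, and then cites Bochner's theorem together with the inversion theorem to conclude $\InverseFourierTransform{f}\in\IntegrableFunctions{\Group{G}}{\CC}$. You instead convolve two Urysohn bumps $u,v$, write $\InverseFourierTransform{f}=\InverseFourierTransform{u}\cdot\InverseFourierTransform{v}$ via the convolution identity, and finish with Cauchy--Schwarz on the two $\LL{2}$ factors. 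Your route is more elementary, bypassing the Bochner machinery; the paper's is terser once those theorems are on the table. One small remark: your aside about the impossibility of ``halving'' a character is a slight red herring---the paper's preliminary translation (which only multiplies $\InverseFourierTransform{f}$ by a unimodular character and hence preserves $\LL{1}$ membership) reduces to the case $\NeutralElementDual\in U$, and then a symmetric $V$ with $V+V\subset U$ always exists by continuity of addition at the identity.
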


\begin{proof}
Without loss of generality, we can consider only the case $U$ is an open neighbourhood of the neutral element $\eta$ in $\DualGroup{G}$ since the topology of $\DualGroup{G}$ is uniform with respect to the group operation (the shifts).

Take an open neighbourhood $U'$ with the compact closure contained in $U$.
For such neighbourhood there exists an open neighbourhood $V$ of $\eta$ such that $V= -V$ and $V+V\subset U'$. Indeed, since $\eta+\eta = \eta$ and by the continuity of the group operation there exist $V_1$,$V_2$ - open neighbourhoods of $\eta$ such that $V_1+V_2\subset U'$. Hence putting $V=V_1\cap V_2 \cap (-V_1)\cap (-V_2)$ we obtain the assertion. Note that the closure of $V + V$ is compact and is contained in $U$.

Consider the indicator function $\IndicatorFunction{V}$ of the set $V$. Put $f=\Convolution{\IndicatorFunction{V}}{\IndicatorFunction{V}}$. By the definition,  $f$ is strictly positive on the open set $V+V$ and $\Supp f = \Cl (V + V)$, hence $\Supp f$ is compact and is contained in $U$. Moreover, on every \LCA~group open sets have strictly positive measures \cite[Section 1.1.2]{rudin}. Therefore, by the continuity of $f$, $\Lnorm{f}{2\Multiplicity} > 0$.

Finally, $f\in\IntegrableFunctions{\DualGroup{G},\CC}$ as well. It is known that $f$ is a positive definite function \cite[Section 1.4.2]{rudin}, so by the Bochner \cite[Section 1.4.3]{rudin} and the inversion \cite[Section 1.5.1]{rudin} theorems the inverse Fourier transform $\InverseFourierTransform{f}$ is integrable.
\end{proof}

\begin{lemma}
\label{lem:322}
Let $\Group{G}$ be an \LCA~group and $\DualGroup{G}$ its dual. If for a given $P\in\NN$ there exist mutually disjoint open sets $\Set{U_p\subset\DualGroup{G}}{p\TruncatedNumbers{P}}$, then every $P$-separable operator on $\HilbertSpace{}$ is representable in the form \eqref{eqn:CRepresentation}.
\end{lemma}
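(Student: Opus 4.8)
The plan is to reduce everything to Proposition \ref{prop:21}, which tells us that for integrable and square integrable $\Function^\mu$ one has $\OperatorRepresentedBy{\Function}=\IntegralMappingExpanded{\DualGroup{G}}{\Projector{\FourierTransform{\Function^1}(\Character)\otimes\cdots\otimes\FourierTransform{\Function^\Multiplicity}(\Character)}}{\Measure{\Character}}$. So it suffices to manufacture mappings $\Function^\mu$ whose Fourier transforms make the right-hand integral equal to the given $P$-separable operator $\Operator=\sum_{p=1}^P\Projector{v_p}$, with $v_p=\TensorProduct{v_p}{\Multiplicity}$ and $v_p^\mu\in\HilbertSpace{\mu}$. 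First I would apply Lemma \ref{lem:321} separately to each of the disjoint open sets $U_p$, obtaining continuous functions $f_p:\DualGroup{G}\to\CC$ with compact support contained in $U_p$, with $\Lnorm{f_p}{2\Multiplicity}>0$ and $\InverseFourierTransform{f_p}\in\IntegrableFunctions{\Group{G}}{\CC}$. The one feature I would exploit throughout is that, since the $U_p$ are mutually disjoint, the functions $f_p$ have mutually disjoint supports.

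Next I would record the positive constants $c_p:=\IntegralMappingExpanded{\DualGroup{G}}{|f_p(\Character)|^{2\Multiplicity}}{\Measure{\Character}}=\Lnorm{f_p}{2\Multiplicity}^{2\Multiplicity}>0$ and absorb them into rescaled vectors, e.g. $w_p^1=c_p^{-1/2}v_p^1$ and $w_p^\mu=v_p^\mu$ for $\mu\geq 2$, so that $c_p\,\Projector{\TensorProduct{w_p}{\Multiplicity}}=\Projector{v_p}$. Then I would define, for each $\mu$, the mapping $\Function^\mu:\Group{G}\to\HilbertSpace{\mu}$ through its Fourier transform,
\begin{equation*}
\FourierTransform{\Function^\mu}(\Character)=\sum_{p=1}^P f_p(\Character)\,w_p^\mu,\qquad \Character\in\DualGroup{G},
\end{equation*}
equivalently $\Function^\mu=\sum_{p=1}^P\InverseFourierTransform{f_p}\cdot w_p^\mu$. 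Each $\InverseFourierTransform{f_p}$ is integrable by Lemma \ref{lem:321}, and square integrable because $f_p$ (being continuous with compact support) lies in $\LL{2}(\DualGroup{G},\CC)$ and $\InverseFourierTransformMapping$ is an isometry; hence each $\Function^\mu$ is a finite combination of $\IntegrableFunctions{\Group{G}}{\CC}\cap\SquareIntegrableFunctions{\Group{G}}{\CC}$ functions times fixed vectors, so $\Function^\mu\in\IntegrableFunctions{\Group{G}}{\HilbertSpace{\mu}}\cap\SquareIntegrableFunctions{\Group{G}}{\HilbertSpace{\mu}}$ and Proposition \ref{prop:21} is applicable.

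The computation then proceeds by expanding the tensor product of the Fourier transforms. Every mixed term carries a coefficient $f_{p_1}(\Character)\cdots f_{p_\Multiplicity}(\Character)$, and whenever the indices are not all equal two of its factors have disjoint supports, so that coefficient function vanishes identically; only the diagonal survives, yielding $\FourierTransform{\Function^1}(\Character)\otimes\cdots\otimes\FourierTransform{\Function^\Multiplicity}(\Character)=\sum_{p=1}^P f_p(\Character)^\Multiplicity\,\TensorProduct{w_p}{\Multiplicity}$. At each fixed $\Character$ at most one summand is nonzero, again by disjointness, so the projector of this vector is $\sum_{p=1}^P|f_p(\Character)|^{2\Multiplicity}\,\Projector{\TensorProduct{w_p}{\Multiplicity}}$. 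Integrating against $\Measure{\Character}$ and using $\IntegralMappingExpanded{\DualGroup{G}}{|f_p|^{2\Multiplicity}}{\Measure{\Character}}=c_p$ gives
\begin{equation*}
\OperatorRepresentedBy{\Function}=\sum_{p=1}^P c_p\,\Projector{\TensorProduct{w_p}{\Multiplicity}}=\sum_{p=1}^P\Projector{v_p}=\Operator,
\end{equation*}
so $\Operator$ is representable in the form \eqref{eqn:CRepresentation}.

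The conceptual core, and the only place that needs care, is this disjoint-support decoupling: it is what collapses the $P^\Multiplicity$ cross-terms of the tensor product back to a single diagonal sum, turning $\OperatorRepresentedBy{\Function}$ into exactly $P$ separable projectors. The condition $\Lnorm{f_p}{2\Multiplicity}>0$ in Lemma \ref{lem:321} is precisely what makes each $c_p$ strictly positive, so that the normalization $w_p^1=c_p^{-1/2}v_p^1$ is legitimate — this is the reason the exponent $2\Multiplicity$, rather than $2$, is built into that lemma. The remaining verifications (integrability of $\Function^\mu$ and applicability of Proposition \ref{prop:21}) are routine.
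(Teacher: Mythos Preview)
Your proof is correct and follows essentially the same approach as the paper's: both invoke Lemma~\ref{lem:321} on each $U_p$, define $\Function^\mu$ via the inverse Fourier transform of $\sum_p f_p(\cdot)\,(\text{vector})$, and exploit the disjoint supports of the $f_p$ together with Proposition~\ref{prop:21} to collapse the integral to $\sum_p\Projector{v_p}$. The only cosmetic difference is that the paper distributes the normalization symmetrically across all $\Multiplicity$ factors (dividing each $v_p^\mu$ by $\Lnorm{f_p}{2\Multiplicity}$), whereas you load the entire factor $c_p^{-1/2}$ onto the first tensor slot; both choices are equally valid.
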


\begin{proof}
Assume we want to represent a separable operator $\sum_{p=1}^{P}\Projector{\TensorProduct{v_p}{\Multiplicity}}$. 
Take continuous functions $f_p:{\DualGroup{G}}\to{\CC}$ with compact supports contained in $U_p$ such that $\Lnorm{f_p}{2\Multiplicity} > 0$ and $\InverseFourierTransform{f_p}\in\IntegrableFunctions{\Group{G}}{\CC}$, respectively. Such functions exist by Lemma \ref{lem:321}. Then $f_p\in\LL{q}({\DualGroup{G}},{\CC})$ for $q=1,2,2\Multiplicity$ (actually for  all $1\leq q\leq \infty$). 
Define $\Psi^\mu:\DualGroup{G}\to\HilbertSpace{\mu}$ by 
\begin{align*}
\Psi^\mu(\Character) &= \sum_p \frac{v^\mu_p}{\Lnorm{f_p}{2\Multiplicity}} \cdot f_p(\Character), & \mu\TruncatedNumbers{\Multiplicity}.
\end{align*}
Put $\Function^\mu = \InverseFourierTransform{\Psi^\mu}$. Then $(\Function^1,\ldots,\Function^\Multiplicity)$ is in $\ReferenceSetConvolutional$ since the Fourier transform is an isomorphism of the spaces of square integrable functions and we assumed that $\InverseFourierTransform{f_p}\in\LL{1}(\Group{G})$. By the Parseval equality  (\ref{eqn:ParsevalEquality}) and formula \eqref{eqn:36},
\begin{align*}
\OperatorRepresentedBy{\Function} &=
\sum_p
\frac{1}{\Lnorm{f_p}{2\Multiplicity}^{2\Multiplicity}} 
\IntegralMappingExpanded
{\DualGroup{G}}
{\Projector{\TensorProduct{v_p}{\Multiplicity}}\cdot |f_p(\Character)}|^{2\Multiplicity}
{\Measure{\Character}} \\
&=
\sum_p
{\Projector{\TensorProduct{v_p}{\Multiplicity}}},
\end{align*} 
since $U_p$ are mutually disjoint.
\end{proof}

\begin{proof}[Proof of Proposition \ref{prop:CRepresentability}]
Assume that the cardinality of $\Group{G}$ is at least $P$. Then the same is true for its dual $\DualGroup{G}$. Indeed, since \LCA~groups are, by the definition, Hausdorff the only possible topology on finite groups is the discrete one (it follows, for example, from the fact that every set composed of only one element is the intersection of all closed neighbourhoods of the element). 
It is well known that every finite abelian group $\DiscreteGroup$ is of the form $\ZZ_{n_1}\times\cdots\times\ZZ_{n_k}$, which implies that they are self-dual (with respect to the discrete topology structure), that is $\DiscreteGroup\simeq\DualGroup{\DiscreteGroup}.$ Now, from the duality between compact and discrete groups, we conclude that the only groups $\Group{G}$ with  $\Cardinality{\DualGroup{G}}<P$ are finite of cardinality less than $P$.

Since $\Cardinality{\DualGroup{G}}\geq P$ and $\DualGroup{G}$ is Hausdorff, there exist $P$ mutually disjoint open sets $\{ U_p \}$ in it. Hence the proof is complete by Lemma \ref{lem:322}.
\end{proof}

\begin{remark}
In Theorem \ref{thm:3} the spaces $\IntegrableFunctions{\Group{G}}{\HilbertSpace{\mu}} \cap \SquareIntegrableFunctions{\Group{G}}{\HilbertSpace{\mu}}$ can be replaced by smaller ones. Namely, in the proof of Proposition \ref{prop:CRepresentability} we use, in fact, mappings $\Function^\mu:\Group{G}\to\CC$ with compactly supported continuous Fourier transforms $\FourierTransform{\Function^\mu}$. 
\end{remark}

\begin{proof}[Proof of Theorem \ref{thm:3}]
By Caratheodory's theorem in convex analysis \cite[{Theorem 17.1}]{rockafellar} all separable operators are $(\dim \HilbertSpace{})^2$-separable. Therefore, by Proposition \ref{prop:CRepresentability}, for each separable operator $\Operator\in\SOHS$ there exist $\Function\in \ReferenceSetConvolutional$ such that $\Operator = \OperatorRepresentedBy{\Function}$ given by (\ref{eqn:CRepresentation}). The converse is a consequence of Theorem \ref{thm:Convolutional}.
\end{proof}

\section{Spectral decompositions of separable operators}
\label{sec:spectral}

Since separable operators are only positive combinations of separable projectors one can not expect that the spectral decomposition of each separable operator consists of separable projectors. Thus it is advisable to consider a problem of characterizing spectral decompositions of separable operators. In the framework of our considerations one could consider the following problem.

\begin{quotation}
For a given Hilbert space $\HilbertSpace{}=\HilbertTensorProduct{\Multiplicity}$, find all discrete groups $\Group{G}$ and all tuples of mappings $\SetSuchThat{v^\mu:\Group{G}\to\HilbertSpace{\mu}}{\mu\TruncatedNumbers{\Multiplicity}}$ such that  for all $g,h\in\Group{G}$ one of the following conditions is satisfied:
\renewcommand{\theenumi}{(\Alph{enumi})}
\renewcommand{\labelenumi}{\theenumi}
\begin{enumerate}
\item \label{enum:a} $\HermitianProduct{\MultipleTensorConvolution{v}{\Multiplicity}(g)} {\MultipleTensorConvolution{v}{\Multiplicity}(h)}{} = 0$, 
\item \label{enum:b} there exists $\lambda\in\RR$ such that $\MultipleTensorConvolution{v}{\Multiplicity}(g) = \lambda \MultipleTensorConvolution{v}{\Multiplicity}(h).$
\end{enumerate}
\renewcommand{\theenumi}{\arabic{enumi}}
\renewcommand{\labelenumi}{\theenumi}
\end{quotation}

In this case $\sum_{g\in\Group{G}}{\Projector{\MultipleTensorConvolution{v}{\Multiplicity}(g)}}$
is the spectral decomposition (up to normalization of each of the projectors) of a Hermitian operator and by Theorem \ref{thm:Convolutional} this operator is separable.

Let us consider two examples. The easiest one is when $\HilbertSpace{} = \HilbertSpace{1}\otimes\HilbertSpace{2}$ and $\Group{G} = \ZZ_2$ with counting measure. In this setting we have the following proposition.

\begin{proposition}
\label{prop:SpectralDecompZZ_2}
Suppose $v^1:\ZZ_2\to\HilbertSpace{1}$ and $v^2:\ZZ_2\to\HilbertSpace{2}$ are mappings with values of the same norm. Then $\Projector{\TensorConvolution{v^1}{v^2}(0)} + \Projector{\TensorConvolution{v^1}{v^2}(1)}$ is the spectral decomposition (up to  normalization of each of the projectors) of a separable operator acting on $\HilbertSpace{1}\otimes\HilbertSpace{2}$ iff
one of the following conditions is satisfied:
\renewcommand{\theenumi}{(\alph{enumi})}
\renewcommand{\labelenumi}{\theenumi}
\begin{enumerate}
\item \label{enum:a1} $\Angle{v^1_0}{v^1_1}{1} \pm \Angle{v^2_0}{v^2_1}{2} = \pi \Mod 2\pi$;
\item \label{enum:b1} $v^\mu_0, v^\mu_1$ are linearly dependent for $\mu=1,2$,
\end{enumerate}
\renewcommand{\theenumi}{(\arabic{enumi})}
\renewcommand{\labelenumi}{\theenumi} 
where $\Angle{v^\mu_0}{v^\mu_1}{\mu}$ denotes the angle between $v^\mu_0$ and $v^\mu_1$ with respect to the scalar product $\ScalarProduct{\cdot}{\cdot}{\mu}$.
\end{proposition}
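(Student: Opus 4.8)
The statement bundles two assertions: that the operator $\Projector{\TensorConvolution{v^1}{v^2}(0)}+\Projector{\TensorConvolution{v^1}{v^2}(1)}$ is separable, and that its two summands constitute its spectral decomposition exactly when (a) or (b) holds. The first is immediate: for $\Group{G}=\ZZ_2$ with the counting measure this operator is exactly the operator (\ref{eqn:CRepresentation}) associated with $(v^1,v^2)$, whence Theorem \ref{thm:Convolutional} yields separability. So the content lies in the spectral claim, and the plan is to reduce it to the dichotomy (A)--(B) posed at the head of the section. Abbreviating the two convolution values by $w_0=v^1_0\otimes v^2_0+v^1_1\otimes v^2_1$ and $w_1=v^1_0\otimes v^2_1+v^1_1\otimes v^2_0$, I would first isolate the elementary fact that a sum $\Projector{w_0}+\Projector{w_1}$ of two unnormalized projectors is, up to normalizing each projector, the spectral decomposition of the operator it defines if and only if $w_0\perp w_1$ or $w_0,w_1$ are linearly dependent. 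Indeed, if $w_0,w_1$ are independent they are eigenvectors of $\Projector{w_0}+\Projector{w_1}$ (which acts nontrivially only on $\Span\{w_0,w_1\}$) precisely when they are orthogonal; and if they are dependent the operator has rank at most one and the claim is trivial. These two alternatives are exactly conditions (A) and (B) of the section, specialized to $0,1\in\ZZ_2$.

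The orthogonal alternative is settled by a direct computation. Expanding $\HermitianProduct{w_0}{w_1}{}$ through the multiplicativity $\HermitianProduct{x\otimes y}{x'\otimes y'}{}=\HermitianProduct{x}{x'}{1}\HermitianProduct{y}{y'}{2}$ gives four terms, and the equal-norm hypothesis $\HermitianNorm{v^1_0}{1}=\HermitianNorm{v^1_1}{1}$, $\HermitianNorm{v^2_0}{2}=\HermitianNorm{v^2_1}{2}$ collapses them into
\[
\HermitianProduct{w_0}{w_1}{}=2\,\HermitianNorm{v^1_0}{1}^2\,\operatorname{Re}\HermitianProduct{v^2_0}{v^2_1}{2}+2\,\HermitianNorm{v^2_0}{2}^2\,\operatorname{Re}\HermitianProduct{v^1_0}{v^1_1}{1}.
\]
Using $\operatorname{Re}\HermitianProduct{v^\mu_0}{v^\mu_1}{\mu}=\HermitianNorm{v^\mu_0}{\mu}\HermitianNorm{v^\mu_1}{\mu}\cos\theta_\mu$ with $\theta_\mu=\Angle{v^\mu_0}{v^\mu_1}{\mu}$, and cancelling the common positive norm factors, orthogonality reduces to $\cos\theta_1+\cos\theta_2=0$. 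The identity $\cos\theta_1+\cos\theta_2=2\cos\frac{\theta_1+\theta_2}{2}\cos\frac{\theta_1-\theta_2}{2}$ then rewrites this as $\theta_1\pm\theta_2=\pi\Mod 2\pi$, which is precisely (a). Every step is reversible, so (a) characterizes the orthogonal case.

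For the dependence alternative I would write $w_0=\kappa w_1$ and regroup by the first tensor factor,
\[
w_0-\kappa w_1=v^1_0\otimes(v^2_0-\kappa v^2_1)+v^1_1\otimes(v^2_1-\kappa v^2_0)=0.
\]
When $v^1_0,v^1_1$ are independent this forces $v^2_0=\kappa v^2_1$ and $v^2_1=\kappa v^2_0$, hence $\kappa=\pm1$ and $v^2_0,v^2_1$ dependent; the roles symmetrize if instead $v^2_0,v^2_1$ are independent. When both fibre pairs are dependent --- condition (b) --- the two values lie on the single line $\CC(v^1_0\otimes v^2_0)$, so $\Projector{w_0}+\Projector{w_1}$ has rank at most one and is trivially spectral; conversely a value of Schmidt rank two (both pairs independent) has support disjoint from that of the other and so can never be proportional to it. Matching these observations against the dichotomy of the first paragraph gives the asserted equivalence.

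The delicate point, and what I expect to be the main obstacle, is the mixed configuration in which one fibre pair is dependent and the other independent. The regrouping shows $w_0=\kappa w_1$ can still hold there, but only in the borderline sub-cases $v^1_1=\pm v^1_0$ or $v^2_1=\pm v^2_0$, which degenerate to $w_0=\pm w_1$ and hence to a rank-one operator that is formally a spectral decomposition although (b) fails. The equal-norm hypothesis is exactly what pins the offending dependence coefficient to modulus one and leaves only the signs $\pm1$; the work then is to decide how these degenerate coincidences are to be treated --- subsumed under a suitable reading of (b), or set aside as the trivial one-term case --- so that (a) and (b) genuinely exhaust the dichotomy. Controlling this boundary is where the argument concentrates, the two displayed computations being routine.
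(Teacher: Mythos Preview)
Your approach is essentially the paper's: separability from Theorem~\ref{thm:Convolutional}, then the dichotomy (A)/(B), with the inner product computation
\[
\HermitianProduct{w_0}{w_1}{}=2R^4\bigl[\cos\Angle{v^1_0}{v^1_1}{1}+\cos\Angle{v^2_0}{v^2_1}{2}\bigr]
\]
handling (A)$\Leftrightarrow$(a), and the tensor regrouping $v^1_0\otimes(v^2_0-\lambda v^2_1)=v^1_1\otimes(\lambda v^2_0-v^2_1)$ handling (B)$\Rightarrow$(b). The paper does exactly this and stops.

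You are in fact more careful than the paper on the second implication. The paper infers ``hence $v^1_0$ and $v^1_1$ are linearly dependent'' directly from the regrouped identity, without isolating the sub-case where both tensor factors on each side vanish. Your flagged mixed configuration, say $v^1_1=v^1_0$ with $v^2_0,v^2_1$ independent, is a genuine one: then $w_0=w_1=v^1_0\otimes(v^2_0+v^2_1)$, the operator is rank one and the two terms trivially form its spectral decomposition, yet (b) fails literally and (a) would force $\Angle{v^2_0}{v^2_1}{2}=\pi$, hence $v^2_1=-v^2_0$, contradicting independence. So the boundary case you worry about is not a defect of your argument but a small gap in the proposition as stated, one the paper's own proof does not address.
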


Note that we put arguments of the functions $v^\mu$ in the subscripts and leave standard notation in case of their convolution $\TensorConvolution{v^1}{v^2}$.

\begin{proof}
By Theorem \ref{thm:Convolutional}, $\Projector{\TensorConvolution{v^1}{v^2}(0)} + \Projector{\TensorConvolution{v^1}{v^2}(1)}$ is a separable operator. 
We check conditions \ref{enum:a}, \ref{enum:b} in the only nontrivial (and nonequivalent) case $g=0$ and $h=1$. We will show that \ref{enum:a}, \ref{enum:a1} and \ref{enum:b}, \ref{enum:b1} are pairwise equivalent.

Equivalence of \ref{enum:a} and \ref{enum:a1}. It follows that
\begin{align*}
\HermitianProduct{\TensorConvolution{v^1}{v^2}(0)} {\TensorConvolution{v^1}{v^2}(1)}{} 
&=
\HermitianProduct{v^1_0}{v^1_1}{1}\HermitianNorm{v^2_0}{2}^2
+ \HermitianProduct{v^1_1}{v^1_0}{1}\HermitianNorm{v^2_1}{2}^2 \\
&\quad + \HermitianNorm{v^1_0}{1}^2\HermitianProduct{v^2_0}{v^2_1}{2}
+ \HermitianNorm{v^1_1}{1}^2\HermitianProduct{v^2_1}{v^2_0}{2}.
\end{align*}
Since all the vectors have the same norm $R$ we get
\begin{eqnarray*}
\HermitianProduct{\TensorConvolution{v^1}{v^2}(0)} {\TensorConvolution{v^1}{v^2}(1)}{} = 2R^2[\cos\Angle{v^1_0}{v^1_1}{1} + \cos\Angle{v^2_0}{v^2_1}{2}],
\end{eqnarray*}
where $
\cos\Angle{v}{v'}{\mu} = {\ScalarProduct{v}{v'}{\mu}}/({\HermitianNorm{v}{\mu}\cdot\HermitianNorm{v'}{\mu}}).$   From this formula \ref{enum:a1} follows easily.

Equivalence of \ref{enum:b} and \ref{enum:b1}.
If $v^1_0\otimes v^2_0 + v^1_1\otimes v^2_1 = \lambda\cdot( v^1_0\otimes v^2_1 + v^1_1\otimes v^2_0)$, then $v^1_0\otimes(v^2_0 - \lambda\cdot v^2_1) = v^1_1\otimes(\lambda\cdot v^2_0 - v^2_1)$. Hence, $v^1_0$ and $v^1_1$ are linearly dependent. Similarly, we prove linear dependence of $v^2_0$ and $v^2_1$. The condition \ref{enum:b1} is therefore fulfilled. 
\end{proof}

Note that in case condition \ref{enum:b} is satisfied, we obtain a trivial solution, i.e. $\Projector{\TensorConvolution{v^1}{v^2}(0)} + \Projector{\TensorConvolution{v^1}{v^2}(1)}$ is proportional to $\Projector{v^1_0\otimes v^2_0}.$

In the second example we consider the  group $\ZZ_n$ with counting measure, and
 $\HilbertSpace{} = \HilbertTensorProduct{\Multiplicity}$ with $\dim\HilbertSpace{\mu} = N_\mu\geq n$. 

\begin{proposition}
\label{prop:SpectralDecompZZ_n}
For each $\mu\TruncatedNumbers{\Multiplicity}$, let $\SetSuchThat{\EE{\mu}_g}{g\in\ZZ_n}$ be an orthonormal system in $\HilbertSpace{\mu}$, and $\lambda^\mu:\ZZ_n\to\CC$, $g\mapsto \lambda_g^\mu\in\CC$ be a function. Consider $v^\mu:\ZZ_n\to\HilbertSpace{\mu}$, $g\mapsto v^\mu_g = \lambda_g^\mu\cdot\EE{\mu}_g$. Then the folowing holds.
\begin{enumerate}[(i)]
\item $\sum_{g\in\ZZ_n} \Projector{\MultipleTensorConvolution{v}{\Multiplicity}(g)}$ 
is the spectral decomposition (up to normalization of each of the projectors) of a separable operator on $\HilbertSpace{}.$
\item Additionaly, if $\Multiplicity \geq 2$, and $\lambda^\mu$ is a constant function for each $\mu\TruncatedNumbers{\Multiplicity}$, then for all $g\in\ZZ_n$ vector $\MultipleTensorConvolution{v}{\Multiplicity}(g)$ reduces to a homothety on each of the compound spaces $\HilbertSpace{\mu}$.
\end{enumerate}
\end{proposition}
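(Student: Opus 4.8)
The plan is to dispatch both parts by direct computation with the explicit form of the tensor convolution, requiring no further Fourier analysis beyond the already-established Theorem \ref{thm:Convolutional}. For part (i), I would first write $\MultipleTensorConvolution{v}{\Multiplicity}(g)$ out from \eqref{eqn:TensorConvolution}: on $\ZZ_n$ with the counting measure the integrals become finite sums, and substituting $v^\mu_h = \lambda^\mu_h\,\EE{\mu}_h$ and reindexing by $g_1 = g - g_2 - \cdots - g_\Multiplicity$ gives
\[
\MultipleTensorConvolution{v}{\Multiplicity}(g) = \sum_{g_1+\cdots+g_\Multiplicity = g} \lambda^1_{g_1}\cdots\lambda^\Multiplicity_{g_\Multiplicity}\,\EE{1}_{g_1}\otimes\cdots\otimes\EE{\Multiplicity}_{g_\Multiplicity},
\]
a sum over tuples $(g_1,\ldots,g_\Multiplicity)\in\ZZ_n^\Multiplicity$ of prescribed total. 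Since $N_\mu\ge n$, the family $\{\EE{1}_{g_1}\otimes\cdots\otimes\EE{\Multiplicity}_{g_\Multiplicity}\}$ is orthonormal, and the key structural fact is that grouping these basis tensors by their total index $s=g_1+\cdots+g_\Multiplicity\in\ZZ_n$ splits $\Span$ of them into mutually orthogonal sectors $W_s$, with $\MultipleTensorConvolution{v}{\Multiplicity}(g)\in W_g$. Orthogonality of distinct sectors then yields $\HermitianProduct{\MultipleTensorConvolution{v}{\Multiplicity}(g)}{\MultipleTensorConvolution{v}{\Multiplicity}(h)}{}=0$ for $g\ne h$, which is exactly condition \ref{enum:a} of the framework preceding the proposition. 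Consequently the projectors act on mutually orthogonal lines, so after normalizing each nonzero $\MultipleTensorConvolution{v}{\Multiplicity}(g)$ the sum $\sum_{g\in\ZZ_n}\Projector{\MultipleTensorConvolution{v}{\Multiplicity}(g)}$ is a genuine spectral decomposition with eigenvalues $\HermitianNorm{\MultipleTensorConvolution{v}{\Multiplicity}(g)}{}^2$, while separability follows from Theorem \ref{thm:Convolutional}, since over $\ZZ_n$ with the counting measure the integral \eqref{eqn:CRepresentation} is precisely this sum. That proves (i).

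For part (ii) I would specialize to constant $\lambda^\mu\equiv c_\mu$ and put $C=c_1\cdots c_\Multiplicity$, so that $\MultipleTensorConvolution{v}{\Multiplicity}(g)=C\sum_{g_1+\cdots+g_\Multiplicity=g}\EE{1}_{g_1}\otimes\cdots\otimes\EE{\Multiplicity}_{g_\Multiplicity}$. I read ``reduces to a homothety on $\HilbertSpace{\mu}$'' as the assertion that the reduced operator $\operatorname{Tr}_{\nu\ne\mu}\Projector{\MultipleTensorConvolution{v}{\Multiplicity}(g)}$ is a scalar multiple of the orthogonal projection onto $V_\mu:=\Span\{\EE{\mu}_g:g\in\ZZ_n\}$. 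To establish it I would expand the projector as a double sum over tuples of total $g$ and take the partial trace over the factors $\nu\ne\mu$: orthonormality forces the indices in those factors to agree, and the two total-sum constraints then force the $\mu$-th indices to agree as well. Counting the $n^{\Multiplicity-2}$ completions of the remaining indices for each fixed value of the surviving index yields
\[
\operatorname{Tr}_{\nu\ne\mu}\Projector{\MultipleTensorConvolution{v}{\Multiplicity}(g)} = |C|^2 n^{\Multiplicity-2}\sum_{h\in\ZZ_n}\Projector{\EE{\mu}_{h}},
\]
which is $|C|^2 n^{\Multiplicity-2}$ times the projection onto $V_\mu$; restricted to $V_\mu$ it equals $|C|^2 n^{\Multiplicity-2}\,\Identity{V_\mu}$, a homothety independent of $g$, settling (ii).

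Neither part hides a genuine difficulty: both are elementary once the multi-index notation is fixed. The only points demanding care are the identification of the $\ZZ_n$-grading by total index in (i) and the combinatorial count of the $n^{\Multiplicity-2}$ surviving tuples in (ii), and the latter affects only the scaling constant, not the qualitative homothety conclusion.
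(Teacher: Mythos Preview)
Your proof is correct and follows essentially the same route as the paper: both parts are handled by direct expansion of the tensor convolution via \eqref{eqn:TensorConvolution}, with orthogonality in (i) coming from the orthonormality of the product basis (the paper's Kronecker-delta calculation is your sector decomposition $W_s$ in different dress) and (ii) done by an explicit partial trace yielding the scalar $n^{\Multiplicity-2}$. Your phrasing of the reduced operator as a multiple of the projection onto $V_\mu=\Span\{\EE{\mu}_g\}$ is in fact slightly more careful than the paper's $\Identity{1}$, which tacitly identifies $\HilbertSpace{1}$ with this $n$-dimensional span.
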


\begin{proof}
(i) Separability of the operator is a consequence of Theorem \ref{thm:Convolutional}. It remains to prove that $\MultipleTensorConvolution{v}{\Multiplicity}(g)$, for $g\in\Group{G}$, are mutually orthogonal. With the use of expanded formula for the tensor convolution \eqref{eqn:TensorConvolution}, we get
\begin{eqnarray*}
\MultipleTensorConvolution{v}{\Multiplicity}(g) = \sum_{g^2,\ldots,g^\Multiplicity\in\ZZ_n}
\lambda^1_{g-g^2-\ldots -g^m}\lambda^2_{g^2}\cdots\lambda^\Multiplicity_{g^\Multiplicity}
\EE{1}_{g-g^2-\ldots -g^m}\otimes\EE{2}_{g^2}\otimes\cdots\otimes\EE{\Multiplicity}_{g^\Multiplicity}.
\end{eqnarray*}
Since $\Set{\EE{\mu}_g}$ is an orthonormal system, it follows that
\begin{multline*}
\HermitianProduct{\MultipleTensorConvolution{v}{\Multiplicity}(g)} {\MultipleTensorConvolution{v}{\Multiplicity}(h)}{} = \\
= \sum_{g^2,\ldots,g^\Multiplicity\in\ZZ_n}
\sum_{h^2,\ldots,h^\Multiplicity\in\ZZ_n}
\Big(
\ComplexConjugate{\lambda}^1_{g-g^2-\ldots -g^m}\ComplexConjugate{\lambda}^2_{g^2}\cdots \ComplexConjugate{\lambda}^\Multiplicity_{g^\Multiplicity}\cdot
\lambda^1_{h-h^2-\ldots -h^m}\lambda^2_{h^2}\cdots\lambda^\Multiplicity_{h^\Multiplicity} \\
\cdot\delta_{g-g^2-\ldots -g^m,h-h^2-\ldots -h^m}\delta_{g^2,h^2}\cdots\delta_{g^\Multiplicity,h^\Multiplicity}\Big),
\end{multline*}
where $\delta_{g,h}=\delta_g(h)$ is the Kronecker delta on $\ZZ_n$. Finally, we compute
\begin{eqnarray}
\label{eqn:2}
\HermitianProduct{\MultipleTensorConvolution{v}{\Multiplicity}(g)} {\MultipleTensorConvolution{v}{\Multiplicity}(h)}{} =
\prod_{\mu}\left(\sum_{g^\mu\in\ZZ_n}
|\lambda^\mu_{g^\mu}|^2 \right)\cdot\delta_{g,h},
\end{eqnarray}
which means that $\MultipleTensorConvolution{v}{\Multiplicity}(g)$ are mutually orthogonal.

(ii) Without loss of generality we can assume $\lambda^\mu \equiv 1$ for all $\mu\TruncatedNumbers{\Multiplicity}$. Then tracing out the operator $\Projector{\MultipleTensorConvolution{v}{\Multiplicity}(g)}$ with respect to all but the first factor of the Hilbert tensor product $\HilbertSpace{}$ we get the reduced operator
\begin{align*}
 & \sum_{g^2,\ldots,g^\Multiplicity\in\ZZ_n}
\sum_{h^2,\ldots,h^\Multiplicity\in\ZZ_n}
\SimpleOperatorExpanded{\EE{1}_{g-g^2-\ldots -g^m}}{\EE{1}_{g-h^2-\ldots -h^m}}
\cdot\delta_{g^2,h^2}\cdots\delta_{g^\Multiplicity,h^\Multiplicity} \\
& =
\sum_{g^2,\ldots,g^\Multiplicity\in\ZZ_n}
\ProjectorExpanded{\EE{1}_{g-g^2-\ldots -g^m}} \\
&=
\sum_{g^3,\ldots,g^\Multiplicity\in\ZZ_n}\Big(\sum_{g^2\in\ZZ_n}
\Projector{\EE{1}_{(g-g_3-\ldots -g^m)-g_2}}\Big) \\
&=
\sum_{g^3,\ldots,g^\Multiplicity\in\ZZ_n}\Identity{1} 
= n^{\Multiplicity - 2} \Identity{1},
\end{align*}
where $\Identity{1}$ denotes the identity operator on $\HilbertSpace{1}$. Similar calculation for $\mu = 2,\ldots,\Multiplicity$ gives the result.
\end{proof}

It is worth noticing that the operator in the above proposition is, in fact, the  projection operator (up to a positive multiplicative constant) on the subspace $$\Span\SetSuchThat{\MultipleTensorConvolution{v}{\Multiplicity}(g)} {g\in\ZZ_n}\subset\HilbertSpace{}.$$ It is because $\MultipleTensorConvolution{v}{\Multiplicity}(g)$ are mutually orthogonal and have the same norm, which is a consequence of \eqref{eqn:2}. Moreover, if we assume $\Multiplicity = 2$, then by (ii) we conclude that there exists a separable operator which spectral-decomposes in a basis of maximally entangled operators.

\section{Concluding remarks}

In this article we presented an approach to the quantum separability problem based on the theory of locally compact abelian groups. 
We showed that on each such group an integral with respect to the Haar measure of the convolution of Hilbert-space-valued mappings gives a separable operator. Then we proved that for large enough groups this observation gives rise to a reformulation of the separability problem, i.e. an operator is separable iff it is an integral of the convolution of appropriate mappings.

In the proofs of main theorems of the article we used two crucial facts. Namely, that there exist a unitary transform between two Hilbert spaces (Parseval's equality) and that there is an operation with values in the first space (convolution) which transforms appropriately with respect to the unitary transform. Therefore, it is a challenge to find different settings with such objects. For example, in \cite{jakubczyk} the authors gave an integral representation of separable states in a bipartite setting, where instead of convolution they considered a differential operator. Unfortunately, they did not get separability criterion as, for example, separable projectors could not be represented in the given framework. More extensive studies of this approach is a subject of \cite{pietrzkowski}.

One could also try to extend our result to the case of compact groups, on which Peter-Weyl theorem establishes a (non-canonical) orthonormal basis on $\SquareIntegrableFunctions{\Group{G}}{\CC}$. The problems are, however, that the dual object to $\Group{G}$, namely the set of irreducible unitary representations, does not posses a natural group structure and, what is even more problematic, the convolution "transforms" to a non-commutative product, hence is not appropriate to our purpose. In fact, a different approach to quantum separability problem of bipartite systems on compact groups has been proposed and studied by Korbicz, Lewenstein and Wehr \cite{korbicz1,korbicz2}, where they translate the property of separability from the algebra of operators on a bipartite Hilbert tensor product to the algebra of functions on a product of a compact group (with convolution as the operation). In particular, in this approach the group must not be abelian since each Hilbert space is identified with a representation space of its unitary representation.

The problem of finding possible spectral decompositions of separable operators posed in section \ref{sec:spectral} may provide a better understanding of the set of separable operators, thus giving new insights into the filed of entanglement science. It seams that approach presented in this article can be helpful in this investigations.

\section*{Acknowledgements}

This work was partially supported by the Polish Ministry of Research and Higher Education grant N201 039 32/2703, 2007-2010.

The author is grateful to Bronis\l aw Jakubczyk and Marek Ku\'s for helpful discussions.


\end{document}